 \newtheorem{theorem}{Theorem}
  \newtheorem{lemma}[theorem]{Lemma}
 \newtheorem{definition}[theorem]{Definition}
  \newtheorem*{definition*}{Definition}
    \newtheorem{observation}[theorem]{Observation}
  \theoremstyle{remark}
\newcommand\numberthis{\addtocounter{equation}{1}\tag{\theequation}}
\renewcommand{\geq}{\ensuremath{\geqslant}}
\renewcommand{\leq}{\ensuremath{\leqslant}}
\newcommand{\err}{\eta}
\DeclareMathOperator*{\argmax}{arg\,max}
\DeclareMathOperator*{\argmin}{arg\,min}
\newcommand{\poptm}{\hat{\imath}}
\newcommand{\optm}{i^\star}
\newcommand{\mech}{\textsc{ScaledGreedy}}
\newcommand{\simplemech}{\textsc{SimpleScaledGreedy}}
\newcommand{\mecherr}{\textsc{ErrorTolerantScaledGreedy}}
\newcommand{\errbound}{\bar{\eta}}
\newcommand{\algo}{\textsc{Makespan-Min}}
\newcommand{\OPT}{\texttt{OPT}}
\newcommand{\ms}{\texttt{MS}}
\newcommand{\instance}{\mathbf{p}}
\newcommand{\schedule}{\mathbf{x}}
\newcommand{\tradeoff}{\gamma}
\title{Strategyproof Scheduling with Predictions}
\author[a]{Eric Balkanski\thanks{eb3224@columbia.edu}}
\author[b]{Vasilis Gkatzelis\thanks{gkatz@drexel.edu}}
\author[b]{Xizhi Tan\thanks{xizhi@drexel.edu}}
\affil[a]{Columbia University, IEOR}
\affil[b]{Drexel University, Computer Science}
\begin{document}

\date{}
\maketitle

\begin{abstract}
In their seminal paper that initiated the field  of algorithmic mechanism design, \citet{NR99} studied the problem of designing strategyproof mechanisms for scheduling jobs on unrelated machines aiming to minimize the makespan. They provided a strategyproof mechanism that achieves an $n$-approximation and they made the bold conjecture that this is the best approximation achievable by any deterministic strategyproof scheduling mechanism. After more than two decades and several efforts, $n$ remains the best known approximation and very recent work by  \citet{CKK21} has been able to prove an $\Omega(\sqrt{n})$ approximation lower bound for all deterministic strategyproof mechanisms. This strong negative result, however, heavily depends on the fact that the performance of these mechanisms is evaluated using worst-case analysis. To overcome such overly pessimistic, and often uninformative, worst-case bounds, a surge of recent work has focused on the ``learning-augmented framework'', whose goal is  to leverage machine-learned  predictions to obtain improved approximations when these predictions are accurate (consistency), while also achieving near-optimal worst-case approximations even when the predictions are arbitrarily wrong (robustness).

In this work, we study the classic strategic scheduling problem of~\citet{NR99} using the learning-augmented framework and give a deterministic polynomial-time strategyproof mechanism that is $6$-consistent and $2n$-robust. We thus achieve the ``best of both worlds'': an $O(1)$ consistency and an $O(n)$ robustness that asymptotically matches the best-known approximation. We then extend this result to provide more general worst-case approximation guarantees as a function of the prediction error. Finally, we complement our positive results by showing that any $1$-consistent deterministic strategyproof mechanism has unbounded robustness.
\end{abstract}

\thispagestyle{empty}
\newpage
\setcounter{page}{1}

%%%%%%%%%%%%%%%%%%%%%%%%%%%%
\section{Introduction}
In their seminal paper which initiated the field of algorithmic mechanism design, \citet{NR99} focused on a natural job scheduling problem involving strategic agents: a set of $m$ jobs needs to be scheduled on a set of $n$ machines aiming to minimize the makespan, and each machine is owned by an agent who requires a monetary compensation in exchange for processing the jobs assigned to them. What makes this problem particularly demanding is that the compensation each agent receives needs to be at least as high as the cost that they suffer for processing the jobs assigned to them (i.e., the required processing time), yet the actual processing times are private information that only that agent knows. Therefore, an agent can choose to misreport (either overstate or understate) their processing times if doing so would change the outcome (i.e., their assignment and their compensation) to one that they prefer. To avoid this issue, the goal is to design \emph{strategyproof} mechanisms that carefully choose the outcome to ensure that no agent has an incentive to misreport their costs, while also ensuring that the makespan of the chosen assignment is as small as possible.

\citet{NR99} proposed a strategyproof mechanism for this problem and showed that the allocation generated by this mechanism is an $n$-approximation of the optimal makespan. Even though this guarantee is much less appealing than the $2$-approximation that one can achieve in polynomial time if the costs are publicly known~\citep{LST90}, they went on to make the bold conjecture that this is the best possible worst-case approximation that any deterministic strategyproof mechanism can achieve (polynomial time or not). After more than two decades, and despite several efforts to tackle this, now classic, problem, no strategyproof mechanism with an approximation better than $O(n)$ is known (not even randomized). Instead, after a sequence of inapproximability results, proving gradually increasing constant approximation lower bounds for deterministic mechanisms (e.g., \cite{NR99,CKV09,KV07,GH20,DS20}), a very recent breakthrough by \citet{CKK21} made major progress by proving a lower bound of $1+\sqrt{n-1}$ for deterministic mechanisms. This result paints a very pessimistic picture, implying that even if a better strategyproof mechanism exists, its worst-case approximation will not be practical. However, this heavily depends on the, rather unrealistic assumption, that the mechanism has no information regarding the costs of the agents.

Facing analogous worst-case impossibility results that are overly pessimistic and rather uninformative, a recent surge of work has focused on the ``learning-augmented framework'', aiming to achieve more refined and informative bounds, while retaining the benefits of worst-case analysis (see  \citep{MV21} for a survey). This framework assumes that the designer is provided with some machine-learned predictions that can be used to design more practical algorithms that achieve near optimal performance when the predictions are correct (this is called the \emph{consistency} guarantee). However, rather than assuming that the predictions are always accurate and forfeiting the benefits of worst-case analysis altogether, this framework seeks to also provide strong guarantees, even if the predictions are arbitrarily inaccurate (this is called the \emph{robustness} guarantee). More formally, the consistency of an algorithm is the worst-case approximation guarantee that it achieves assuming that the prediction it was provided with is accurate, while its robustness is the worst-case approximation guarantee without any assumptions regarding the prediction accuracy (see Section~\ref{sec:prelims} for more details). The quality of an algorithm is then evaluated based on both its consistency and its robustness. This framework was initially restricted to algorithms, but it was very recently extended to settings involving strategic agents \citep{ABGOT22,GKST22,XL22}, motivating the design of new mechanisms leveraging predictions to overcome the incentive issues that arise. In this paper, we revisit the classic scheduling problem of \citet{NR99} using the learning-augmented framework and design mechanisms enhanced with predictions regarding the machines' processing times.

Given predictions regarding the machines' processing times, a naive solution would be to assume the predictions are accurate and output a schedule with small makespan according to the predicted processing times alone (i.e., without eliciting the agents' true processing times). This is a strategyproof mechanism that would achieve a consistency of $O(1)$, but its robustness would be unbounded (e.g., even if a single job's processing time is mispredicted, it could end up being assigned to a machine where its actual processing time is arbitrarily high). On the other hand, the mechanism of \citet{NR99} would achieve a robustness of $O(n)$, but its consistency would also be no better than $O(n)$, since its output disregards the predictions. In an attempt to provide a middle ground between these two extremes, very recent work by \citet{XL22} proposed a strategyproof mechanism that can guarantee a consistency of $O(1)$ with bounded robustness, but the robustness guarantee of $O(n^3)$ that it achieves is much worse than the $O(n)$ robustness achieved in \cite{NR99}. Our main result in this paper is a strategyproof that combines the ``best of both worlds'': a consistency of $O(1)$ with the best-known robustness of $O(n)$.

\subsection{Our results}
In this paper we leverage the learning-augmented framework to develop a deeper understanding of this classic strategic scheduling problem, and design much more practical mechanisms enhanced with predictions. In particular, we assume that the mechanism is provided with predictions $\hat{p}(i,j)$ regarding the amount of time that each machine $i$ would require in order to fully process each job $j$; crucially, these predictions may be highly inaccurate. 

Our main result is a deterministic polynomial-time strategyproof mechanism that guarantees a consistency of $(4+2\tradeoff)$ and a robustness of $(1 + \frac{1}{\tradeoff})n$ for any choice of $\tradeoff \in \left(0, \frac{n}{2}-1\right)$, which is a parameter that the designer can determine to receive their desired trade-off between consistency and robustness. For example, setting $\gamma=1$ yields a mechanism with a small constant consistency of $6$ (i.e., a $6$-approximation of the optimal makespan when the predictions are accurate) while maintaining a robustness of $2n$ (i.e., a worst-case approximation guarantee that asymptotically matches the best-known approximation, irrespective of how inaccurate the predictions may be). 

Our main mechanism, \mech, uses the predicted processing times to pre-compute a schedule that (approximately)\footnote{Note that even if the processing times are known in advance, no known polynomial time algorithm can achieve an approximation factor better than 2, and it is NP-hard to achieve a factor better than $1.5$ \citep{LST90}.} minimizes the makespan, assuming these predictions are correct. The mechanism then uses this schedule as a guide and adds a ``bias'' in favor of scheduling jobs on the machines that they were assigned to in this schedule. After introducing this bias, the mechanism follows a simple greedy procedure, so the main technical novelty is the subtle way in which this bias towards the predictions is introduced, while ensuring that we asymptotically match the best known robustness when the predictions are arbitrarily inaccurate. Roughly speaking, the mechanism carefully chooses a subset of jobs and proceeds to assign each of them to the same machine as in the pre-computed schedule, unless the processing times reported by the machines for that job differ significantly from the predicted ones.
Before presenting this mechanism and its analysis in Section~\ref{sec:warmup}, we first dedicate Section~\ref{sec:scaledGreedy} to a simpler mechanism, \simplemech, which provides a warm-up toward our main result while combining a consistency of $O(1)$ with a robustness of $O(n^2)$. 

In Section~\ref{sec:error}, we take one step further and propose a mechanism that provides worst-case approximation guarantees as a function of the prediction error. In other words, rather than focusing only on the two extremes of consistency (where the error is zero) and robustness (where the error is unbounded), this mechanism guarantees a good approximation as long as the prediction error is not too high. Given a prediction $\hat{\instance}$ of the actual processing times $\instance$, we let the prediction error $\eta$ be the largest ratio between the predicted processing time and actual processing time for any machine $i,$ job $j$ pair, i.e., $\eta = \max_{i, j} \max\left\{\frac{\hat{p}(i, j)}{p(i,j)}, \frac{p(i,j)}{\hat{p}(i,j)}\right\}.$ Our mechanism takes as input an error tolerance parameter $\errbound > 0$ and achieves an approximation of  $O(\eta^2)$ as long as $\eta \leq \errbound$ and $O(\errbound^2 n)$ otherwise. 

Finally, in Section~\ref{sec:lower_bound} we complement our positive results with an impossibility result, showing that achieving a consistency of $1$, i.e., returning an optimal schedule whenever the predictions are accurate, necessary leads to unbounded robustness (irrespective of any computational limitations).

\subsection{Related work}
\paragraph{Strategyproof scheduling}
Apart from proposing their $n$-approximate mechanism and conjecturing that this is the best possible guarantee across all deterministic and strategyproof mechanisms, \citet{NR99, NR01} also proved an approximation lower bound of $2$. This lower bound was later improved to 2.41 by \citet{CKV09}, then 2.61 by \citet{KV07}. More recently, \citet{GH20} further raised this lower bound to 2.755 and then \citet{DS20} pushed it to 3. Whether a constant approximation is possible remained an open problem until very recently, when the breakthrough result of \cite{CKK21} eventually proved a lower bound of $1 + \sqrt{n-1}$. Earlier work by \citet{ADL12} had also shown a lower bound of $n$ for the special class of mechanisms that are anonymous. Note that although these lower bounds focus on deterministic mechanisms, even if we allow randomization, the best known approximation guarantee achievable by a randomized strategyproof mechanism remains $O(n)$~\cite{CKK10}.  

\vspace{-.2cm}

\paragraph{Learning-augmented framework.}
Motivated by the shortcomings of worst-case analysis, an exciting new literature has focused on the design and analysis of ``learning-augmented algorithms'', or ``algorithms with predictions'' (see \citep{MV21} for a survey of some early work and \cite{alps} for a more up-to-date list of papers in this area). This literature assumes that the algorithm is provided with predictions regarding the instance at hand and its performance is evaluated using \emph{consistency} and \emph{robustness}, which are the two primary metrics introduced by \citet{lykouris2018competitive}. During the last five years, more than 100 papers have revisited classic algorithmic problems using this framework, including online paging \citep{lykouris2018competitive},  scheduling \citep{KPZ18}, and secretary problems \citep{dutting2021secretaries,AGKK20}, optimization problems with covering \citep{BMS20} and knapsack constraints \citep{im2021online}, as well as Nash social welfare maximization \citep{banerjee2020online} and several graph problems \citep{azar2022online} . This line of work also studied online scheduling problems (e.g., \cite{PSK18,Mit,LLMV20,BMRS20,LX21,BDKLP22,BOSW22}), but it was restricted to non-strategic settings and the predictions were used to overcome information limitations regarding the future, rather than limitations regarding privately held information. Very recently, \citet{ABGOT22} and, independently, also~\citet{XL22} extended the framework to settings involving strategic agents. 
\citet{ABGOT22} provided optimal learning-augmented mechanisms, focusing on the strategic facility location problem, while \citet{XL22} considered a variety of different problems. One of these problems was the strategic scheduling problem that we study in this paper, for which they gave a strategyproof mechanism that is $O(\gamma)$-consistent and $O(n^3/\gamma^2)$-robust, for any $\gamma \in [1, n].$ However, to achieve any interesting consistency bounds (i.e., $o(n)$), their mechanism cannot guarantee a robustness better than $\omega(n^2)$; in particular, to achieve an optimal consistency of $O(1)$, the mechanism's robustness is $\Omega(n^3)$. Our main result significantly improves this result by combining the best of both worlds: an optimal consistency of $O(1)$ and the best-known robustness of $O(n)$.

\section{Preliminaries}\label{sec:prelims}
\paragraph{Makespan minimization on unrelated machines.} In the classic problem of scheduling on unrelated machines there is a set $N$ of $n$ machines, a set $M$ of $m$ jobs, and processing times $p(i,j)$ for a job $j \in M$ to be processed on machine $i \in N$. We use $\mathbf{x} \in \{0,1\}^{n \times m}$ to denote the \emph{allocation} of jobs to machines, where $x(i,j) =1$ if and only if job $j$ is allocated to machine $i$. Given an instance $\instance$ of processing times and an allocation $\schedule$, the \emph{makespan} is the maximum over all machines of the total processing time assigned to that machine, i.e., $\ms(\instance,\schedule) = \max_{i}\sum_{j}p(i,j)\cdot x(i,j)$. The goal is to find an allocation $\schedule$ such that the makespan is minimized. We denote the optimal makespan for an instance $\mathbf{p}$ by $\OPT(\mathbf{p}) = \min_{\mathbf{x}} \ms(\mathbf{p}, \mathbf{x})$ and we let $\mathbf{x}^\star(\mathbf{p})$ denote an optimal allocation for instance $\mathbf{p}$ (so $\ms(\mathbf{p}, \mathbf{x}^\star(\mathbf{p})) = \OPT(\mathbf{p}))$. We abuse notation with $\OPT$ and $\mathbf{x}^\star$ when the instance $\mathbf{p}$ is clear from context. 

\paragraph{Strategic makespan minimization.} In the strategic version of this problem, each machine $i$ is controlled by a distinct self-interested agent. Each such agent incurs a cost for processing jobs that is equal to its load, i.e., the sum of the processing times of the jobs assigned to $i$ and needs to receive a payment in exchange. Specifically, given an allocation $\schedule$, if agent $i$ receives a payment $\rho_i$, and their utility is equal to $\rho_i - \sum_{j : x(i,j) = 1} p(i,j)$. The processing times $\mathbf{p}_i = (p(i, 1), \ldots, p(i, m))$ are private information that is known only to the agent controlling machine $i$, and each agent can misreport this information aiming to increase her utility. A scheduling mechanism consists of an allocation rule $\mathbf{x}(\mathbf{p}) \in \{0,1\}^{n \times m}$ for assigning jobs to machines and a payment rule $\boldsymbol{\rho}(\mathbf{p}) \in \mathbb{R}^n$ for compensating the machines for their costs. A scheduling mechanism is \emph{strategyproof} if truthfully reporting $\mathbf{p}_i$ is a dominant strategy for every machine $i$, i.e., it is always an optimal strategy for agent $i$ to report the truth, regardless of what every other agent reports.

\paragraph{Characterization of strategyproof mechanisms.} Seminal work on strategic scheduling has characterized the family of allocations rules $\mathbf{x}(\mathbf{p})$ that admit a payment rule $\boldsymbol{\rho}(\mathbf{p})$ such that $(\mathbf{x}(\mathbf{p}), \boldsymbol{\rho}(\mathbf{p}))$ is a strategyproof scheduling mechanism.

\begin{definition}[Monotonicity Property]
An allocation algorithm is monotone if fixing the processing time of all other agents $\instance_{-i}$, for every two reports of agent $i$, $\instance_i$ and $\instance_i'$, the associated allocations $\schedule$ and $\schedule'$ satisfy
\[\sum_{j \in M} (x(i,j) - x'(i,j))(p(i,j) - p'(i,j)) \leq 0.\]
\end{definition}
\begin{lemma}[\citep{NR01}, \citep{SY05}]\label{lem:monotonicity}
An allocation algorithm $\mathbf{x}(\mathbf{p})$ admits a payment rule $\boldsymbol{\rho}(\mathbf{p})$ s.t. $(\mathbf{x}(\mathbf{p}), \boldsymbol{\rho}(\mathbf{p}))$ is  a strategyproof scheduling mechanism if and only if $\mathbf{x}(\mathbf{p})$ satisfies the monotonicity property.
\end{lemma}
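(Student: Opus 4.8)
The plan is to prove the two directions of the equivalence separately. For the (routine) ``only if'' direction, I would assume $(\mathbf{x}(\cdot),\boldsymbol{\rho}(\cdot))$ is strategyproof, fix the reports $\instance_{-i}$ of all other agents, and consider any two reports $\instance_i,\instance_i'$ of agent $i$, with corresponding allocations $\schedule,\schedule'$ (entries $x(i,j),x'(i,j)$) and payments $\rho_i,\rho_i'$. Writing down the incentive constraint when $i$'s true type is $\instance_i$ (she should not prefer to report $\instance_i'$) and the incentive constraint when her true type is $\instance_i'$ (she should not prefer to report $\instance_i$) and adding the two inequalities, the payments $\rho_i,\rho_i'$ cancel and the remaining inequality rearranges into exactly $\sum_{j\in M}(x(i,j)-x'(i,j))(p(i,j)-p'(i,j))\leq 0$. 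This is a couple of lines.

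For the ``if'' direction, assume $\mathbf{x}(\cdot)$ is monotone and construct the payments. Since strategyproofness can be verified separately for each fixed $\instance_{-i}$, I would fix an agent $i$ and a profile $\instance_{-i}$ and build $\rho_i$ as a function of $\instance_i$. Agent $i$'s allocation is a subset $S\subseteq M$, so only finitely many allocations are possible; let $\mathcal{R}$ collect those that actually arise, with nonempty preimages $P(S)=\{\instance_i : \text{the allocation of } i \text{ at } (\instance_i,\instance_{-i}) \text{ equals } S\}$, and write $\mathrm{cost}_S(\instance_i)=\sum_{j\in S}p(i,j)$, which is linear in $\instance_i$. Producing a payment amounts to finding reals $(\rho_S)_{S\in\mathcal{R}}$ with $\rho_S-\mathrm{cost}_S(\instance_i)\geq\rho_T-\mathrm{cost}_T(\instance_i)$ for every $\instance_i\in P(S)$ and every $T\in\mathcal{R}$, i.e.\ a node potential $\rho$ on the complete digraph with vertex set $\mathcal{R}$ and arc weights $w(S\to T)=\inf_{\instance_i\in P(S)}\bigl(\mathrm{cost}_T(\instance_i)-\mathrm{cost}_S(\instance_i)\bigr)$ satisfying $\rho_T-\rho_S\leq w(S\to T)$. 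Such a potential exists precisely when this graph has no negative cycle, in which case one takes $\rho_S$ to be the shortest-path distance from a fixed reference vertex and reads off the incentive constraints by construction; the weights $w(S\to T)$ are finite, since $P(S)$ is nonempty and the inequality established below bounds $w(S\to T)$ from below.

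It remains to verify the no-negative-cycle condition, which is where monotonicity and the geometry of the type space are used. For $2$-cycles this is a direct computation: for $\instance_i\in P(S)$ and $\instance_i'\in P(T)$, applying monotonicity to these two reports gives $\langle\chi_S-\chi_T,\instance_i-\instance_i'\rangle\leq 0$, which rearranges to $\mathrm{cost}_T(\instance_i)-\mathrm{cost}_S(\instance_i)\geq\mathrm{cost}_T(\instance_i')-\mathrm{cost}_S(\instance_i')$; taking the infimum over $\instance_i\in P(S)$ and the supremum over $\instance_i'\in P(T)$ gives $w(S\to T)+w(T\to S)\geq 0$. The step I expect to be the main obstacle is upgrading this to cycles of length $\geq 3$, since weak ($2$-cycle) monotonicity does not by itself rule out longer negative cycles. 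Closing this gap is exactly the content of the Saks--Yu characterization, which I would invoke (or reprove in this special case): agent $i$'s type domain $\mathbb{R}_{\geq 0}^m$ is convex and each $\mathrm{cost}_S$ is affine, and the argument interpolates along line segments between points of the preimages $P(S_0),P(S_1),\dots$ that appear in a hypothetical negative cycle, using affineness of the cost functions and a continuity/averaging argument to derive a contradiction. Assembling the pieces --- the ``only if'' direction from summing the two incentive constraints, and the ``if'' direction from the potential/shortest-path construction plus the Saks--Yu cycle argument --- yields the lemma.
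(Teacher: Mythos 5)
The paper does not prove this lemma; it is stated as a quotation of prior work, with the necessity direction due to \citet{NR01} and the sufficiency direction (weak monotonicity implies implementability on convex domains) due to \citet{SY05}, so there is no in-paper argument to compare against. Your outline correctly reconstructs the standard proof from those references: the ``only if'' direction by adding the two incentive constraints so the payments cancel, and the ``if'' direction via the Rochet-style potential/shortest-path construction, where the $2$-cycle (weak monotonicity) condition is immediate and the absence of negative cycles of length at least three is exactly the content of the Saks--Yu theorem, which you correctly flag as the only non-routine step and invoke rather than reprove. Since the lemma is itself attributed to \citet{SY05}, deferring that step to the citation is appropriate, and the rest of your sketch (finiteness of the arc weights via the $2$-cycle inequality, convexity of $\mathbb{R}_{\geq 0}^m$, affineness of the cost functions) is accurate.
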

This fundamental result reduces the mechanism design problem of strategic scheduling to an algorithm design problem where the goal is to find a (near)-optimal monotone allocation. One simple but natural mechanism that satisfies the above property is the \emph{greedy mechanism}: it assigns each job $j$ to the machine with minimum processing time for job $j$, i.e., $x(i,j) = 1$ if $i = \argmin_{i'\in N}p(i',j)$. \citet{NR01} analyzed this mechanism and showed that it achieves an $n$-approximation to the optimal makespan.

\paragraph{Learning-augmented mechanism design.} In the learning-augmented mechanism design framework, before requesting the processing time vector $\instance_i$ from each machine $i$, the designer is provided with a prediction $\hat{\instance}$ of the entire $n \times m$ processing time values $p(i,j)$. The designer can use this information to choose the rules of the mechanism but, as in the standard stretagyproof scheduling setting, the final mechanism, denoted $\schedule(\instance,\hat{\instance})$, needs to be strategyproof. We use \emph{consistency} and \emph{robustness} to measure the performance of the mechanism. where consistency is the worst-case approximation given a accurate prediction, i.e., $\instance = \hat{\instance}$ and the robustness is the worst-case approximation given any predictions. Given a instance $\instance$, a mechanism is $\alpha$-\emph{consistent} if it achieves an $\alpha$-approximation ration when the prediction is correct ($\hat{\instance} = \instance$), i.e., 
\[\max_{\instance} \left\{ \frac{\ms(\instance,\schedule(\instance, \instance))}{\ms(\instance, \schedule^\star(\instance))}\right\} \leq \alpha.\]
A mechanism is $\beta$-\emph{robust} if it achieves a $\beta$-approximation ratio even when the predictions are arbitrarily wrong, i.e.,
\[\max_{\instance, \hat{\instance}} \left\{ \frac{\ms(\instance,\schedule(\instance, \hat{\instance}))}{\ms(\instance, \schedule^\star(\instance))}\right\} \leq \beta.\]
One can also evaluate the worst-case approximation ratio as a function of the prediction error $\eta \geq 0$. In this paper we define the error $\eta(\instance, \hat{\instance})$ as the largest ratio between the predicted processing time and actual processing time for any $i$ $j$ pair, formally,
$\eta = \max_{i \in N, j \in M} \max\left\{\frac{\hat{p}(i,j)}{p(i,j)}, \frac{p(i,j)}{\hat{p}(i,j)}\right\}.$
Given a prediction error $\eta$, a mechanism achieves a $\gamma(\eta)-$approximation if
\[\max_{\instance, \hat{\instance}: \eta(\instance, \hat{\instance}) \leq \eta} \left\{ \frac{\ms(\instance,\schedule(\instance, \hat{\instance}))}{\ms(\instance, \schedule^\star(\instance))}\right\} \leq \gamma(\eta).\]
Note that for $\eta = 1$ the bound corresponds to the consistency guarantee and for $\eta \rightarrow \infty$ it captures the robustness guarantee.

\section{Warm-Up: a $4$-Consistent and $n^2$-Robust Mechanism}\label{sec:warmup}
As a warm-up, we first present a strategyproof mechanism that is $4$-consistent and $n^2$-robust. In the next section, we build on this mechanism   to obtain a mechanism with $O(n)$ robustness.

\paragraph{The mechanism.} \citet{NR99} showed that the greedy mechanism that assigns each job $j$ to the machine with minimum processing time for job $j$ has a wost-case approximation factor of $n$, implying that it is $n$-consistent and $n$-robust. On the other hand, returning a schedule that is optimal for the predicted instance gives a $1$-consistent mechanism with unbounded robustness. \simplemech, described formally in Mechanism~\ref{mech:simplescaledgreedy}, is a variant of the greedy mechanism where the processing times are scaled as a function of the predictions. More formally, instead of greedily assigning job $j$ to the machine $i$ with minimum processing time $p(i,j)$, \simplemech \ assigns  $j$ to the machine $i$ with minimum scaled processing time $r(i,j) \cdot  p(i,j)$, where $r(i,j)$ are scalars that are defined by the mechanism. We note that even when this scaling depends arbitrarily on the predicted instance $\hat{\instance}$, this mechanism is monotone (which we formally show in Lemma~\ref{lem:wstrategyproof}), and thus strategyproof (these scalars are independent of the reported instance $\instance$).

The central part of our mechanism thus consists of constructing  scalars $r(i,j)$ as a function of the predictions. The mechanism first computes an assignment $\hat{\schedule}$ for the predicted instance by using a, not necessarily strategyproof, scheduling algorithm \algo \ that is given as input to the mechanism. We call $\hat{\schedule}$ the predicted assignment. Let $\poptm_j$ be the machine that $j$ is assigned to according to the predicted assignment, which we call the predicted machine of job $j$.   Note that with scalars $r(i,j) =  \frac{\hat{p}(\poptm_j,j)}{\hat{p}(i,j)}$, we have that $r(i,j) \cdot  \hat{p}(i,j) = \hat{p}(\poptm_j,j)$ for all jobs $j$. Thus, with such scaling factors and by breaking ties in favor of $\poptm_j$, each job $j$ would be assigned to its predicted machine $\poptm_j$. The issue with such scaling factors is that they can be either arbitrarily large or arbitrarily small, which causes the robustness to be unbounded since these factors can cause a job to be allocated on an undesirable machine when the predictions are inaccurate. To avoid this issue, we cap the scaling factors to ensure that they have values between $1$ and $n$. We break ties in favor of $\poptm_j$ when $r(\poptm_j,j) \cdot  p(\poptm_j,j) = \argmin_i  r(i,j) \cdot  p(i,j)$, and otherwise we break them in favor of the machine with minimum index.

Note that when the prediction is correct, i.e., $\hat{\instance} = \instance$, the mechanism assigns a job $j$ to its predicted machine $\poptm_j$ unless  $\min_i \hat{p}(i,j) \leq \frac{\hat{p}(\poptm_j,j)}{n}$, i.e., unless there exists a machine $i$ with much smaller predicted processing time for $j$ than the predicted machine $\poptm_j$ of job $j$. If   $\min_i \hat{p}(i,j) \leq \frac{\hat{p}(\poptm_j,j)}{n}$, and the prediction is correct, the mechanism  assigns $j$ to the machine $\argmin_i p(i,j)$ with the smallest true (and predicted) processing time for $j$. 

\begin{algorithm}[h]
\textbf{Input:} instance $\instance \in \mathbb{R}^{n \times m}$, predicted instance $\hat{\instance} \in \mathbb{R}^{n \times m}$, scheduling algorithm \algo

$\hat{\schedule} \leftarrow \algo(\hat{\instance})$\\
$\poptm_j \leftarrow  $ the machine $i$ that job $j$ is assigned  to according to $\hat{\schedule}$, i.e., $\hat{x}(\poptm_j,j) = 1$, for each $j \in M$ \\
$r(i,j) \leftarrow \max\left(1, \min\left( \frac{\hat{p}(\poptm_j,j)}{\hat{p}(i,j)},n\right)\right)$, for each $(i,j) \in N \times M$ \\
$i_j \leftarrow \argmin_i  r(i,j) \cdot  p(i,j) $, break ties in favor of $\poptm_j$, for each $j \in M$ \label{mech1:tie} \\
\textbf{if} $i = i_j$ \textbf{then} $x(i, j) = 1$ \textbf{else} $x(i,j) = 0$, for each $(i,j) \in N \times M$ \\
\Return{$\mathbf{x}$}
\caption{\simplemech}
\label{mech:simplescaledgreedy}
\end{algorithm}

We first analyze the mechanism's consistency, then its robustness, and finally show that it is strategyproof.

\paragraph{The  consistency.} We first introduce some notation. We assume that $\algo$   is an $\alpha$-approximation algorithm for makespan minimization and let $M_i = \{j \in M: x(i,j)  = 1\}$ and $\hat{M}_i = \{j \in M:  \hat{x}(i, j) = 1\}$ be the sets of jobs $j$  such that $j$ is assigned to machine $i$ by \simplemech \ and the predicted assignment, respectively.  To bound the total processing time of the jobs $M_i$ assigned to machine $i$, for each $i \in N$,  we separately bound the total processing times of  $M_i \cap \hat{M}_i$ and $M_i \setminus \hat{M}_i$.  Bounding the total processing time of jobs in $M_i \cap \hat{M}_i$, i.e.,   jobs assigned to machine $i$ according to both the mechanism's assignment $\schedule$ and the predicted assignment $\hat{\schedule}$,  by $\alpha \OPT$ is almost immediate when the predictions are accurate.

\begin{lemma}
\label{lem:boundcap}
Assume that $\hat{\instance} = \instance$, then, for all $i \in N$, $\sum_{j \in M_i \cap \hat{M}_i} p(i,j) \leq \alpha \OPT(\instance)$.
\end{lemma}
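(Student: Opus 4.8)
The plan is to observe that, under accurate predictions, the only object that matters here is the predicted assignment $\hat\schedule$, and that $M_i \cap \hat M_i$ is a subset of $\hat M_i$, so the load contributed by these jobs is dominated by the load they contribute in $\hat\schedule$ — which is controlled by the approximation guarantee of $\algo$.

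Concretely, the first step is to note that when $\hat\instance = \instance$, Mechanism~\ref{mech:simplescaledgreedy} runs $\algo$ on the true instance, i.e.\ $\hat\schedule = \algo(\instance)$. Since $\algo$ is an $\alpha$-approximation algorithm for makespan minimization, we have $\ms(\instance, \hat\schedule) \leq \alpha\,\OPT(\instance)$. In particular, for every machine $i$, the load of $i$ under $\hat\schedule$ satisfies $\sum_{j \in \hat M_i} p(i,j) \leq \ms(\instance, \hat\schedule) \leq \alpha\,\OPT(\instance)$, where we also used $p = \hat p$ so that $p(i,j) = \hat p(i,j)$ for the jobs in $\hat M_i = \{j : \hat x(i,j) = 1\}$.

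The second step is simply that $M_i \cap \hat M_i \subseteq \hat M_i$ and all processing times are nonnegative, so
\[
\sum_{j \in M_i \cap \hat M_i} p(i,j) \;\leq\; \sum_{j \in \hat M_i} p(i,j) \;\leq\; \alpha\,\OPT(\instance),
\]
which is exactly the claimed bound.

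There is essentially no obstacle in this lemma — it is the ``almost immediate'' base case flagged in the text. The only thing to be slightly careful about is to invoke the accuracy assumption $\hat\instance = \instance$ at the two places where it is needed: once so that $\algo$'s $\alpha$-approximation is with respect to $\OPT(\instance)$ rather than $\OPT(\hat\instance)$, and once so that the per-job processing times appearing in the bound ($p(i,j)$) coincide with the ones that $\algo$ optimized over ($\hat p(i,j)$). The genuine work is deferred to the companion bound on $M_i \setminus \hat M_i$, where the capped scaling factors $r(i,j)$ actually come into play.
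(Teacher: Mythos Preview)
Your proof is correct and follows essentially the same approach as the paper: bound the sum over $M_i \cap \hat{M}_i$ by the sum over $\hat{M}_i$, use $\hat{\instance} = \instance$ to identify $p$ with $\hat{p}$, and invoke the $\alpha$-approximation guarantee of \algo\ to bound the load by $\alpha\,\OPT(\instance)$. The only cosmetic difference is that the paper routes through $\ms(\hat{\instance},\hat{\schedule}) \leq \alpha\,\OPT(\hat{\instance}) = \alpha\,\OPT(\instance)$, whereas you directly write $\ms(\instance,\hat{\schedule}) \leq \alpha\,\OPT(\instance)$; these are of course identical once $\hat{\instance} = \instance$.
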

\begin{proof}
For any machine $i \in N,$
observe that 
$$\sum_{j \in M_i \cap \hat{M}_i} p(i,j)   \leq \sum_{j \in \hat{M}_i} p(i,j) 
 = \sum_{j \in \hat{M}_i} \hat{p}(i,j) \leq \ms(\hat{\instance}, \hat{\schedule}) \leq \alpha \OPT(\hat{\instance}) = \alpha \OPT(\instance),$$
where first equality is since $\instance = \hat{\instance}$, the second inequality is by definition of $\hat{M}_i$ and the makespan of allocation $\hat{\schedule}$, the last inequality is since $\algo$  is an $\alpha$-approximation algorithm, and the last equality is  since $\hat{\instance} = \instance$.
\end{proof}

The main part of the analysis of the consistency  bound is the total processing time of jobs that are assigned to $i$ even though $i$ is not their predicted assignment, i.e., the total processing time of $M_i \setminus \hat{M}_i$. We first show that such jobs must have a scalar $r(i,j) = n$.

\begin{lemma}
\label{lem:scalefactorn}
For any job $j \in M$ and machine $i \in N$, if $\hat{\instance} = \instance$ and $j  \in M_i \setminus \hat{M}_i$, then $r(i,j) = n$.
\end{lemma}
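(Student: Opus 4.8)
The plan is to argue by contradiction: suppose $j \in M_i \setminus \hat{M}_i$ but $r(i,j) < n$. I want to show that this forces $j$ to actually be assigned to its predicted machine $\poptm_j$, contradicting $j \notin \hat{M}_i$ (note $i \neq \poptm_j$ since $j \in \hat{M}_{\poptm_j}$ but $j \notin \hat{M}_i$). The key observation is the tie-breaking rule on line~\ref{mech1:tie}: job $j$ is assigned to $\poptm_j$ whenever $r(\poptm_j,j)\cdot p(\poptm_j,j) = \min_{i'} r(i',j)\cdot p(i',j)$. So it suffices to show that, under the assumption $r(i,j) < n$ and $\instance = \hat{\instance}$, the scaled processing time on $\poptm_j$ is no larger than the scaled processing time on any machine, in particular on machine $i$ — which would make $i_j = \poptm_j \neq i$ and contradict $j \in M_i$.

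The main computation is to compare $r(\poptm_j,j)\cdot p(\poptm_j,j)$ with $r(i',j)\cdot p(i',j)$ for an arbitrary machine $i'$. First, since $r(i,j) = \max\bigl(1,\min(\hat{p}(\poptm_j,j)/\hat{p}(i,j),\,n)\bigr)$ and by hypothesis $r(i,j) < n$, the cap at $n$ is not active for machine $i$, so $r(i,j) = \max\bigl(1,\hat{p}(\poptm_j,j)/\hat{p}(i,j)\bigr)$. The point of the scalars is that $r(i',j)\cdot \hat{p}(i',j)$ equals $\hat{p}(\poptm_j,j)$ when the ratio lies in $[1,n]$, and is at least $\hat{p}(\poptm_j,j)$ when the uncapped ratio would have been below $1$ (i.e.\ when $\hat{p}(i',j) > \hat{p}(\poptm_j,j)$, so $r(i',j)=1$ and $r(i',j)\hat{p}(i',j) = \hat{p}(i',j) > \hat{p}(\poptm_j,j)$), and is at least $\hat{p}(\poptm_j,j)$ as well when capped at $n$ — wait, that last case is the dangerous one, so I need to be careful: if $r(i',j) = n < \hat{p}(\poptm_j,j)/\hat{p}(i',j)$ then $r(i',j)\hat{p}(i',j) = n\hat{p}(i',j) < \hat{p}(\poptm_j,j)$, which would let $i'$ beat $\poptm_j$. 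The resolution is exactly the hypothesis $r(i,j) < n$ together with $j \in M_i$: it means \emph{machine $i$} did not get capped and $i$ was the greedy winner, so $r(i,j)p(i,j) \le r(i',j)p(i',j)$ for all $i'$; and since $\instance = \hat{\instance}$ and $r(i,j)<n$ we get $r(i,j)p(i,j) = r(i,j)\hat{p}(i,j) \ge \hat{p}(\poptm_j,j) = r(\poptm_j,j)\hat{p}(\poptm_j,j) = r(\poptm_j,j)p(\poptm_j,j)$ (using $\poptm_j$'s own ratio is $1$, so $r(\poptm_j,j)=1$). Chaining these, $r(\poptm_j,j)p(\poptm_j,j) \le r(i,j)p(i,j) \le r(i',j)p(i',j)$ for every $i'$, so $\poptm_j$ attains the minimum and wins by the tie-break — contradiction.

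The step I expect to be the main (if modest) obstacle is handling the interaction of the two one-sided caps in $r$ cleanly: I must verify $r(\poptm_j,j) = 1$ (because the uncapped ratio $\hat{p}(\poptm_j,j)/\hat{p}(\poptm_j,j) = 1$ so both the $\max$ with $1$ and the $\min$ with $n$ are inactive), and I must correctly extract from "$r(i,j) < n$" that the $\min$ with $n$ is not binding for $i$ so that $r(i,j)\hat p(i,j) \ge \hat p(\poptm_j,j)$ holds with the right direction of inequality in all sub-cases (ratio $\ge 1$ gives equality to $\hat p(\poptm_j,j)$; ratio $< 1$ gives $r(i,j)=1$ and hence $r(i,j)\hat p(i,j) = \hat p(i,j) > \hat p(\poptm_j,j)$). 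Once that bookkeeping is in place, the contradiction with the definition of $i_j$ and the tie-breaking rule is immediate, and since $j$ was arbitrary the lemma follows.
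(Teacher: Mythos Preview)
Your proposal is correct and is essentially the contrapositive of the paper's direct argument: the paper deduces from $j\in M_i\setminus\hat{M}_i$ (and the tie-break) that $r(i,j)\,p(i,j) < r(\poptm_j,j)\,p(\poptm_j,j) = \hat{p}(\poptm_j,j)$, hence $r(i,j) < \hat{p}(\poptm_j,j)/\hat{p}(i,j)$, which by the definition of $r$ forces $r(i,j)=n$; you instead assume $r(i,j)<n$, conclude $r(i,j)\hat p(i,j)\ge \hat p(\poptm_j,j)$, and get the same contradiction via the tie-break. One minor simplification: the chain through an arbitrary $i'$ is unnecessary---comparing $i$ with $\poptm_j$ alone already yields the contradiction, since $i_j=i\neq\poptm_j$ requires the \emph{strict} inequality $r(i,j)p(i,j) < r(\poptm_j,j)p(\poptm_j,j)$.
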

\begin{proof}
Consider a job $j \in M$ and machine $i \in N$ such that $\hat{\instance} = \instance$ and $j  \in M_i \setminus \hat{M}_i$. Since $j \in M_i \setminus \hat{M}_i$, we have that $\poptm_j \neq i$ and  $r(i,j) \cdot  p(i,j) < r(\poptm_j,j) \cdot  p(\poptm_j,j)$ by definition of \simplemech. We get that
$$r(i,j) < r(\poptm_j,j) \cdot  \frac{p(\poptm_j,j)}{p(i,j)}  = \max\left(1, \min\left( \frac{\hat{p}(\poptm_j,j)}{\hat{p}(\poptm_j,j)},n\right)\right)  \cdot  \frac{p(\poptm_j,j)}{p(i,j)} = \frac{p(\poptm_j,j)}{p(i,j)} = \frac{\hat{p}(\poptm_j,j)}{\hat{p}(i,j)},$$
where the first equality is by definition of $r(i,j)$ and the last equality since 
$\hat{\instance} = \instance$. Since $r(i,j) < \frac{\hat{p}(\poptm_j,j)}{\hat{p}(i,j)}$
and, by definition, $r(i,j) = \max\left(1, \min\left( \frac{\hat{p}(\poptm_j,j)}{\hat{p}(i,j)},n\right)\right) $, we obtain that $r(i,j) = n$.
\end{proof}

We are now ready to bound the total processing time of $M_i \setminus \hat{M}_i$.

\begin{lemma}
\label{lem:boundintersect}
Assume that $\hat{\instance} = \instance$, then, for all $i \in N$, $\sum_{j \in M_i \setminus \hat{M}_i} p(i,j) \leq  \alpha \cdot \OPT(\instance)$.
\end{lemma}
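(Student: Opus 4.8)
The plan is to reduce the sum to the total load of the predicted schedule $\hat{\schedule}$ and then exploit the fact that the scalars are capped at $n$. First I would invoke Lemma~\ref{lem:scalefactorn}: since $\hat{\instance} = \instance$, every job $j \in M_i \setminus \hat{M}_i$ has $r(i,j) = n$. By the definition $r(i,j) = \max\left(1, \min\left(\frac{\hat{p}(\poptm_j,j)}{\hat{p}(i,j)}, n\right)\right)$, the equality $r(i,j) = n$ forces $\hat{p}(i,j) \le \frac{\hat{p}(\poptm_j,j)}{n}$, and since predictions are accurate this gives $p(i,j) = \hat{p}(i,j) \le \frac{\hat{p}(\poptm_j,j)}{n}$ for each such job.

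Summing this per-job bound over $j \in M_i \setminus \hat{M}_i$ yields
\[
\sum_{j \in M_i \setminus \hat{M}_i} p(i,j) \;\le\; \frac{1}{n} \sum_{j \in M_i \setminus \hat{M}_i} \hat{p}(\poptm_j,j) \;\le\; \frac{1}{n} \sum_{j \in M} \hat{p}(\poptm_j,j),
\]
where the last step just drops the restriction to $M_i \setminus \hat{M}_i$, all terms being nonnegative. The point of passing to the full job set $M$ is that $\sum_{j \in M} \hat{p}(\poptm_j,j)$ is exactly the total load of the predicted schedule summed over all machines: partitioning the jobs by their predicted machine gives $\sum_{j \in M} \hat{p}(\poptm_j,j) = \sum_{i' \in N} \sum_{j \in \hat{M}_{i'}} \hat{p}(i',j) \le n \cdot \ms(\hat{\instance}, \hat{\schedule})$, since each machine's load is at most the makespan. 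As $\algo$ is an $\alpha$-approximation algorithm, $\ms(\hat{\instance}, \hat{\schedule}) \le \alpha\,\OPT(\hat{\instance}) = \alpha\,\OPT(\instance)$ using $\hat{\instance} = \instance$, so $\sum_{j \in M} \hat{p}(\poptm_j,j) \le n\alpha\,\OPT(\instance)$. Substituting back, the $1/n$ factor cancels the number of machines and gives $\sum_{j \in M_i \setminus \hat{M}_i} p(i,j) \le \alpha\,\OPT(\instance)$.

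The one step that carries the real content — and where I would be careful — is the choice to charge these jobs against the sum of predicted loads over \emph{all} $n$ machines: the factor-$n$ slack handed to us by the capped scalar (via Lemma~\ref{lem:scalefactorn}) is precisely what absorbs the loss incurred because the jobs in $M_i \setminus \hat{M}_i$ may be spread across all machines of the predicted schedule. A more naive accounting that keeps each such job on its own predicted (or optimal) machine would only deliver an $O(n)\cdot\OPT$ bound; everything else is bookkeeping with nonnegative sums.
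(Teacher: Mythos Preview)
Your proof is correct and follows essentially the same route as the paper's: invoke Lemma~\ref{lem:scalefactorn} to get a per-job bound $p(i,j)\le \tfrac{1}{n}\hat{p}(\poptm_j,j)$, sum over $M_i\setminus\hat{M}_i$, relax to all of $M$, rewrite as the total load of $\hat{\schedule}$, and bound that by $n\cdot\ms(\hat{\instance},\hat{\schedule})\le n\alpha\,\OPT(\instance)$. The only cosmetic difference is that the paper extracts the per-job bound from the allocation inequality $r(i,j)p(i,j)\le r(\poptm_j,j)p(\poptm_j,j)$ together with $r(\poptm_j,j)=1$ and $r(i,j)=n$, whereas you read off $\hat{p}(i,j)\le \hat{p}(\poptm_j,j)/n$ directly from the definition of the capped scalar; both yield the same inequality under $\hat{\instance}=\instance$.
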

\begin{proof}
Let $i$ be an arbitrary machine and assume that $\hat{\instance} = \instance$. First, observe that
$$\sum_{j \in M_i \setminus \hat{M}_i} p(i,j) \leq \sum_{j \in M_i \setminus \hat{M}_i} \frac{r(\poptm_j,j)}{r(i,j)} p(\poptm_j, j) = \frac{1}{n}\sum_{j \in M_i \setminus \hat{M}_i} p(\poptm_j, j) \leq \frac{1}{n}\sum_{j \in M} p(\poptm_j, j) = \frac{1}{n}\sum_{j \in M} \hat{p}(\poptm_j, j)$$
where the first inequality is since $j$ was assigned to machine $i$ by \simplemech, the first equality since $r(\poptm_j,j) = 1$ and  $r(i,j) = n$ for $j \in M_i \setminus \hat{M}_i$ by Lemma~\ref{lem:scalefactorn}. We also have that 
$$\frac{1}{n}\sum_{j \in M} \hat{p}(\poptm_j,j) = \frac{1}{n} \sum_{i \in N} \sum_{j \in \hat{M}_i} \hat{p}(i,j) \leq \max_{i \in N} \sum_{j \in \hat{M}_i} \hat{p}(i,j) =   \ms(\hat{\instance}, \hat{\schedule}) \leq \alpha  \cdot \OPT(\hat{\instance}) = \alpha \cdot \OPT(\instance)$$
where the last inequality is since $\algo$  is an $\alpha$-approximation algorithm. Combing the above two series of inequalities, we get $\sum_{j \in M_i \setminus \hat{M}_i} p(i,j) \leq \alpha \cdot \OPT(\instance).$
\end{proof}

By combining Lemma~\ref{lem:boundcap} and Lemma~\ref{lem:boundintersect}, we get that the mechanism is $2\alpha$-consistent.
  
\begin{lemma}
\label{lem:wconsistency}
\simplemech \ is a mechanism that is $2\alpha$-consistent.
\end{lemma}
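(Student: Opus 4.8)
The plan is straightforward: combine the two bounds we have already established to bound the total load on each machine, then take the maximum over all machines. Fix an arbitrary machine $i \in N$ and assume $\hat{\instance} = \instance$. The jobs assigned to $i$ by \simplemech{} are exactly $M_i$, and we can partition $M_i = (M_i \cap \hat{M}_i) \cup (M_i \setminus \hat{M}_i)$. Lemma~\ref{lem:boundcap} gives $\sum_{j \in M_i \cap \hat{M}_i} p(i,j) \leq \alpha \OPT(\instance)$, and Lemma~\ref{lem:boundintersect} gives $\sum_{j \in M_i \setminus \hat{M}_i} p(i,j) \leq \alpha \OPT(\instance)$. Adding these two inequalities yields that the load on machine $i$ is
\[
\sum_{j \in M_i} p(i,j) = \sum_{j \in M_i \cap \hat{M}_i} p(i,j) + \sum_{j \in M_i \setminus \hat{M}_i} p(i,j) \leq 2\alpha \OPT(\instance).
\]

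Since this holds for every machine $i$, we get $\ms(\instance, \schedule(\instance,\instance)) = \max_i \sum_{j \in M_i} p(i,j) \leq 2\alpha \OPT(\instance)$, which is exactly the statement that \simplemech{} is $2\alpha$-consistent (recalling that consistency is the worst-case ratio $\ms(\instance,\schedule(\instance,\instance))/\ms(\instance,\schedule^\star(\instance))$ and $\ms(\instance,\schedule^\star(\instance)) = \OPT(\instance)$).

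I do not anticipate a genuine obstacle here — this is a clean two-line assembly of the preceding lemmas. The only thing to be careful about is purely bookkeeping: making sure the partition of $M_i$ into $M_i \cap \hat{M}_i$ and $M_i \setminus \hat{M}_i$ is clearly invoked so that the two lemmas apply to disjoint sets whose union is $M_i$, and recalling the definition of consistency so the chain of inequalities lands on the desired ratio. All the substantive work (in particular the argument via Lemma~\ref{lem:scalefactorn} that off-prediction jobs carry scaling factor $n$) has already been done in the earlier lemmas, so this final step is just the conclusion.
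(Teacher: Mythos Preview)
Your proof is correct and follows essentially the same approach as the paper: partition $M_i$ into $M_i \cap \hat{M}_i$ and $M_i \setminus \hat{M}_i$, apply Lemma~\ref{lem:boundcap} and Lemma~\ref{lem:boundintersect} respectively, and sum to get $2\alpha\OPT(\instance)$. The paper's version is slightly terser (it leaves the ``take the max over $i$'' step implicit), but the argument is identical.
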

\begin{proof}
Assume that $\instance = \hat{\instance}$. For any machine $i$, we have
$$\sum_{j \in M_i}p(i,j) =  \sum_{j \in M_i \cap \hat{M}_i}p(i,j) + \sum_{j \in M_i\setminus \hat{M}_i}p(i,j)\leq \alpha \OPT(\instance) + \alpha \OPT(\instance) = 2 \alpha \OPT(\instance)$$
where the inequality is by Lemma~\ref{lem:boundcap} and Lemma~\ref{lem:boundintersect}.
\end{proof}

\paragraph{The robustness.} The crucial part of the mechanism that allows obtaining a robustness of $n^2$ is that the scalars $r(i,j)$ are bounded between $1$ and $n$.

\begin{restatable}{rLem}{lemrobustnesslowerbound}
\label{lem:wrobustness}
\simplemech\ is a mechanism that is $\Theta(n^2)$-robust.
\end{restatable}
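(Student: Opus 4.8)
The statement has two halves, and the $O(n^2)$ upper bound is the routine one. The plan there is to use only the fact that every scalar lies in $[1,n]$: since the mechanism assigns job $j$ to $i_j=\argmin_i r(i,j)p(i,j)$, comparing against $j$'s truly cheapest machine $\greedym_j\in\argmin_i p(i,j)$ gives $r(i_j,j)p(i_j,j)\le r(\greedym_j,j)p(\greedym_j,j)\le n\cdot p(\greedym_j,j)$, and $r(i_j,j)\ge 1$ then forces $p(i_j,j)\le n\min_{i'}p(i',j)$. So every job is handled at cost at most $n$ times its minimum, and summing over the jobs on any fixed machine, the load is at most $n\sum_{j\in M}\min_{i'}p(i',j)$; bounding $\sum_{j\in M}\min_{i'}p(i',j)$ by the cost of an optimal allocation and then by $n\,\OPT(\instance)$ (the standard argument that assigning each job to its cheapest machine is an $n$-approximation) yields the $n^2$ bound, independently of the input algorithm $\algo$.

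The substance is the matching $\Omega(n^2)$ lower bound, for which I would construct an explicit pair (true instance $\instance$, prediction $\hat{\instance}$) on which \simplemech\ dumps a linear number of jobs, each inflated by a factor $n$, onto one machine. The intended gadget: take $\algo$ to be any algorithm returning an optimal allocation (a $1$-approximation, hence admissible); designate machine $1$ as a ``dump'' machine and machines $T_j:=j+1$ as ``home'' machines for $j\in\{1,\dots,n-1\}$; introduce $n-1$ ``victim'' jobs $v_j$ and $n-1$ ``blocker'' jobs $b_j$, and a huge constant $H=n^3$. In the prediction, set $\hat p(T_j,v_j)=1$, $\hat p(1,v_j)=n$, $\hat p(T_j,b_j)=n^2$, and $\hat p(\cdot,\cdot)=H$ for all other pairs; in the true instance, set $p(T_j,v_j)=1$, $p(1,v_j)=n$, $p(T_j,b_j)=1$, and $p(\cdot,\cdot)=H$ elsewhere. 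The point is that on $\hat{\instance}$ each $b_j$ is forced onto $T_j$ (any other machine costs $H$), so each home machine already carries load $n^2$, which makes ``put every $v_j$ on machine $1$, every $b_j$ on its $T_j$'' the unique optimal allocation — so $\poptm_{v_j}=1$ and $\poptm_{b_j}=T_j$. Unwinding the scalar definition then gives $r(1,v_j)=1$, $r(T_j,v_j)=n$, and all scalars equal to $1$ for the blockers. On the true instance, every $b_j$ goes to $\argmin_i p(i,b_j)=T_j$ (cost $1$), while for each $v_j$ the scaled costs are $r(1,v_j)p(1,v_j)=n$ and $r(T_j,v_j)p(T_j,v_j)=n$ (tie, everything else $=H>n$), broken toward the predicted machine $\poptm_{v_j}=1$; hence all $n-1$ victims land on machine $1$ at true cost $n$ each, a load of $(n-1)n$. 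Since placing both $v_j$ and $b_j$ on $T_j$ for every $j$ gives makespan $2$, we get $\OPT(\instance)\le 2$, so the ratio is at least $(n-1)n/2=\Omega(n^2)$.

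The main obstacle, and what the construction is really designed around, is a tension between the two things we need from the prediction: getting the large cap $r(T_j,v_j)=n$ requires $v_j$'s predicted machine to be at least $n$ times costlier than $T_j$, i.e.\ $\algo$ must \emph{avoid} the cheap machine $T_j$ for $v_j$, so $T_j$ has to be contested in the prediction; but if $T_j$ is also contested in the true instance, $\OPT(\instance)$ becomes $\Omega(n)$ and the ratio collapses back to $O(n)$. The resolution is to make the congestion live only in the prediction — via the blocker costs $n^2$, which evaporate in the true instance where $p(T_j,b_j)=1$ — so that the true instance is essentially congestion-free. A secondary issue is ensuring $\algo$ genuinely outputs the intended schedule; I handle this by making that schedule the unique optimum of $\hat{\instance}$ (any victim moved off machine $1$ strictly overloads a home machine), which suffices when $\algo$ is taken to be optimal. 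I would also note that strengthening the lower bound to hold for \emph{every} $\alpha$-approximation $\algo$ appears to need a different idea, since the penalty $\algo$ incurs for co-locating a victim on its home machine is only a $(1+o(1))$ factor of the predicted optimum.
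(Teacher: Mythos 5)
Your proof is correct and follows essentially the same strategy as the paper: the $O(n^2)$ upper bound via the fact that all scalars lie in $[1,n]$ (the paper compares against the optimal machine $\optm_j$ rather than the cheapest, but the two bounds are interchangeable), and an $\Omega(n^2)$ lower bound built from blocker jobs whose congestion exists only in the prediction, forcing the predicted-optimal schedule to dump the victim jobs onto one machine with capped scalar $n$ while the true instance is congestion-free. The only cosmetic difference is that your victims produce an exact tie ($r(1,v_j)p(1,v_j)=r(T_j,v_j)p(T_j,v_j)=n$) resolved by the tie-break toward $\poptm_{v_j}$, whereas the paper perturbs the home-machine cost to $1+\epsilon$ to make the comparison strict; both are valid given the mechanism's stated tie-breaking rule.
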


\begin{proof}
We first show that the robustness is at most $n^2$. Consider an instance $\instance$, an optimal assignment $\schedule^{\star}$ for $\instance$,  a job $j$, and a machine $i$ such that $j \in M_i$. Let $\optm_j$ be the machine that $j$ is assigned to according to $\schedule^{\star}$. By the mechanism, we have $1 \leq r(i',j) \leq n$ for all machines $i'$. We therefore have:
\[p(i,j) \leq r(i,j) \cdot p(i,j) \leq r(\optm_j,j) \cdot p(\optm_j,j) \leq n \cdot p(\optm_j,j),\]
where  the second inequality is by line~\ref{mech1:tie} and the fact that $j$ is assigned to $i$. We get that 
\[\ms(\instance, \schedule) = \max_{i \in N}\sum_{j \in M_i}p(i,j) \leq \sum_{i \in N}\sum_{j \in M_i}p(i,j)    \leq n \sum_{i \in N}\sum_{j \in M_i} p(\optm_j,j) \leq n^2\max_{i \in N}\sum_{j \in M_i} p(\optm_j,j) = n^2 \OPT.\]
In fact, the robustness guarantee is tight. Due to space limitation, we defer the proof to the appendix~\ref{sec:appwarmuplemma}. 
\end{proof}

\paragraph{Strategyproofness.} We show that the mechanism is strategyproof by proving that it is monotone and then using Lemma~\ref{lem:monotonicity}.

\begin{lemma}
\label{lem:wstrategyproof}
\simplemech \ is a strategyproof mechanism.
\end{lemma}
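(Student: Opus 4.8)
The plan is to invoke the characterization in Lemma~\ref{lem:monotonicity} and show that the allocation rule of \simplemech\ is monotone. The crucial structural feature to exploit is that the scalars $r(i,j)$ are computed purely from the predicted instance $\hat{\instance}$, and that the tie-breaking priorities (favoring $\poptm_j$, and otherwise the lowest machine index) are likewise independent of the reported instance $\instance$. Hence, holding the reports $\instance_{-i}$ of all other agents fixed, the assignment of each job $j$ is determined by comparing $r(i,j)\cdot p(i,j)$ --- a strictly increasing function of the single quantity $p(i,j)$ that agent $i$ controls --- against the fixed values $\{r(i',j)\cdot p(i',j)\}_{i'\neq i}$, using a fixed tie-break.

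First I would establish a single-job monotonicity fact: for a fixed job $j$ and fixed $\instance_{-i}$, if agent $i$ is assigned $j$ when reporting value $p(i,j)$, then it is still assigned $j$ when reporting any smaller value $p'(i,j) < p(i,j)$. Indeed, $i$ receives $j$ exactly when $r(i,j)p(i,j)$ is minimal among all machines and, among the machines attaining that minimum, $i$ has the highest tie-break priority; decreasing $p(i,j)$ to $p'(i,j)$ only decreases $r(i,j)p(i,j)$ (since $r(i,j)>0$ is fixed), so $i$ remains an argmin and either stays tied-and-winning or becomes the unique minimizer --- in both cases it still gets $j$. Equivalently, whether $i$ receives $j$ is a weakly decreasing threshold function of $p(i,j)$.

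Then I would carry out the term-by-term case analysis. Since the allocation treats jobs independently, it suffices to show $(x(i,j)-x'(i,j))(p(i,j)-p'(i,j))\le 0$ for every job $j$, where $\schedule$ and $\schedule'$ are the allocations under reports $\instance_i$ and $\instance_i'$. If $x(i,j)=x'(i,j)$ the term is $0$. If $x(i,j)=1$ and $x'(i,j)=0$, then the two reports must differ on job $j$ (identical reports give identical allocations), and by the threshold property $i$ would also win $j$ under the smaller of the two reported values, so necessarily $p(i,j)<p'(i,j)$, making the term $1\cdot(p(i,j)-p'(i,j))<0$; the case $x(i,j)=0,\ x'(i,j)=1$ is symmetric. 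Summing over all $j$ yields the monotonicity inequality, and Lemma~\ref{lem:monotonicity} then gives strategyproofness.

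The only delicate point --- and the place where a hasty argument could go wrong --- is the handling of ties: one must confirm that the tie-breaking rule does not secretly depend on the reported processing times, and that lowering one's reported (scaled) cost can never cause a machine to \emph{lose} a job it previously won through a tie. Because $\poptm_j$ and the index order are functions of $\hat{\instance}$ alone, the tie-break is a fixed total priority order over machines for each job, which is exactly what makes the threshold property above valid; I would state this explicitly rather than gloss over it. Note also that this argument makes no use of the accuracy of the predictions, so strategyproofness holds for an arbitrary predicted instance $\hat{\instance}$.
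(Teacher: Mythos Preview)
Your proposal is correct and follows essentially the same approach as the paper: both reduce to monotonicity via Lemma~\ref{lem:monotonicity}, observe that the scalars $r(i,j)$ and tie-breaking rule depend only on $\hat{\instance}$, and then verify the monotonicity inequality job-by-job using the threshold behavior of the assignment in $p(i,j)$. The paper's proof is slightly terser (it argues the contrapositive direction, showing $x(i,j)=0$ and $p(i,j)\le p'(i,j)$ imply $x'(i,j)=0$), whereas you argue the equivalent direct direction and are more explicit about why the fixed tie-breaking order is needed; substantively the arguments coincide.
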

\begin{proof}
Consider a machine $i \in N$ and a predicted assignment $\hat{\schedule}$. Consider two instances $\instance$ and $\instance'$ that differ only on machine $i$ and the associated allocations $\schedule$ and $\schedule'$ returned by \simplemech \ when the predicted assignment is $\hat{\schedule}$. Given a fixed predicted assignment $\hat{\schedule}$, the mechanism is deterministic, so $x(i,j) \in \{0,1\}$ for all $i \in N$ and $j \in M$.  Since the predicted assignment is fixed, we also have that the scalars $r(i,j)$ are fixed.

Let  $j \in M$ be an arbitrary job.  Without loss of generality, we assume that $p(i,j) \leq  p'(i,j)$. Consider the case where $x(i,j) = 0$, i.e., job $j$ is not assigned to machine $i$ for processing times $\instance$. By the mechanism, this implies that $i \neq \argmin_{i'}  r(i',j) \cdot  p(i',j).$ Since (1) $p(i',j) = p'(i',j)$ for $i' \neq i$, (2) $p(i,j) \leq  p'(i,j)$, and (3)  $i \neq \argmin_{i'}  r(i',j) \cdot  p(i',j)$, we get that $i \neq \argmin_{i'}  r(i',j) \cdot  p'(i',j),$ which implies $x'(i,j) = 0$. We thus obtain that $x(i,j) \geq x'(i,j)$, which implies that $(x(i,j) - x'(i,j))(p(i,j) - p'(i,j)) \leq 0$ since $p(i,j) \leq  p'(i,j)$. \simplemech \ is thus a monotone mechanism, which, by Lemma~\ref{lem:monotonicity}, implies that it is strategyproof. 
\end{proof}

\paragraph{The main result for \simplemech.} Combining Lemmas~\ref{lem:wconsistency}, \ref{lem:wrobustness}, and \ref{lem:wstrategyproof}, we obtain the main result for \simplemech.

\begin{theorem}
    \simplemech \ is a strategyproof, $2\alpha$-consistent, and $\Theta(n^2)$-robust mechanism.
\end{theorem}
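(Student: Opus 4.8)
The plan is to obtain the theorem as an immediate corollary of the three lemmas already proved for \simplemech: Lemma~\ref{lem:wstrategyproof} supplies strategyproofness, Lemma~\ref{lem:wconsistency} supplies $2\alpha$-consistency, and Lemma~\ref{lem:wrobustness} supplies $\Theta(n^2)$-robustness. Since these are precisely the three assertions in the statement, the proof is a single line of assembly; there is no further argument to carry out.

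If one wanted to reconstruct the supporting chain rather than merely cite it, the natural order is the one used above. First establish strategyproofness via the monotonicity characterization of Lemma~\ref{lem:monotonicity}: fix the predicted assignment $\hat{\schedule}$, observe that the scalars $r(i,j)$ depend only on $\hat{\instance}$ and never on the reported instance $\instance$, and check that a unilateral deviation on machine $i$ that only increases some $p(i,j)$ can only (weakly) cause $i$ to lose job $j$, which yields $\sum_{j}(x(i,j)-x'(i,j))(p(i,j)-p'(i,j)) \le 0$. Next, for consistency, set $\hat{\instance}=\instance$ and split each machine's load as $M_i = (M_i \cap \hat{M}_i) \cup (M_i \setminus \hat{M}_i)$: the first part is at most $\alpha\,\OPT(\instance)$ because it is dominated by the load that \algo places on $i$ in $\hat{\schedule}$; the second part is also at most $\alpha\,\OPT(\instance)$ because every ``misplaced'' job $j \in M_i \setminus \hat{M}_i$ must have $r(i,j)=n$ while $r(\poptm_j,j)=1$, so its contribution to $i$ is a $1/n$ fraction of $\hat{p}(\poptm_j,j)$, and summing over all jobs recovers at most $\frac1n\sum_j \hat{p}(\poptm_j,j) \le \ms(\hat{\instance},\hat{\schedule}) \le \alpha\,\OPT$. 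Finally, for robustness one uses only the cap $1 \le r(i',j) \le n$: for $j$ on machine $i$ with optimal machine $\optm_j$, $p(i,j) \le r(i,j)\,p(i,j) \le r(\optm_j,j)\,p(\optm_j,j) \le n\,p(\optm_j,j)$, and summing over all machines and jobs loses a further factor $n$, giving $\ms(\instance,\schedule) \le n^2\,\OPT(\instance)$; a matching instance gives the $\Omega(n^2)$ lower bound.

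The combination step itself presents no obstacle whatsoever. The only genuinely delicate ingredient among the three underlying results is the tightness direction of Lemma~\ref{lem:wrobustness} --- the $\Omega(n^2)$ construction deferred to Appendix~\ref{sec:appwarmuplemma} --- where one must choose a prediction and a true instance so that the capped scalars conspire to force a quadratic blow-up; but that work is already completed and is not re-derived in this proof. Hence the theorem follows by invoking Lemmas~\ref{lem:wconsistency}, \ref{lem:wrobustness}, and \ref{lem:wstrategyproof}.
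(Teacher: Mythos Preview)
Your proposal is correct and matches the paper's own proof, which is literally the one-line combination of Lemmas~\ref{lem:wconsistency}, \ref{lem:wrobustness}, and \ref{lem:wstrategyproof}. The additional recap you provide of the three supporting arguments is accurate and faithful to the paper's proofs of those lemmas.
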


To obtain a polynomial-time mechanism, we can have $\alpha = 2$ and a consistency of $4$ with \algo \ being the $2$-approximation algorithm for makespan minimization on unrelated machines by \citet{LST90}. By ignoring running time considerations, we can obtain $\alpha = 1$ and a consistency of $2$ with \algo \ finding an optimal schedule.

\section{The Scaled Greedy Mechanism}\label{sec:scaledGreedy}
In this section, we present a deterministic polynomial-time mechanism that is $(4+2\tradeoff)$-consistent and $\left(1 + \frac{1}{\tradeoff}\right)n$-robust, where $\tradeoff \in \left(0, \frac{n}{2}-1\right)$ is a parameter that controls the tradeoff between the consistency and robusntess achieved by the mechanism. This mechanism  builds on the mechanism from the previous section. In the next section, we extend this mechanism to obtain a mechanism that achieves an approximation as a function of the prediction error.

\subsection{The mechanism} 

\mech, which is described formally in Mechanism~\ref{mech:scaledgreedy}, defines scalars $r(i,j)$  and assigns  a job $j$ to the machine $i$ with minimum scaled processing time $r(i,j) \cdot p(i,j)$. As in \simplemech, the mechanism first computes a predicted assignment $\hat{\schedule}$ for the predicted instance $\hat{\instance}$. Recall that  scalars $r(i,j) = \frac{\hat{p}(\poptm_j,j)}{\hat{p}(i,j)}$ are designed so that the mechanism assigns a job $j$ to its predicted assignment $\poptm_j$ when the predictions are correct,  even if $j$ has a smaller predicted processing time on $i$ than on $\poptm_j$.
To improve the robustness from quadratic to linear, \mech \ upper bounds the scalars $r(i,j)$ more aggressively, and more carefully, than \simplemech, which simply upper bounds them by $n$. Scalars that are close to $1$ are intuitively better for the robustness of the mechanism since the greedy mechanism, which achieves an $n$-approximation, corresponds to the mechanism where the scalars are all equal to $1$. The mechanism upper bounds the scalars by maintaining sets $J_i$, $I$ and $T$. 

\vspace{.3cm}
\begin{algorithm}[H]
\textbf{Input:} instance $\instance \in \mathbb{R}^{n \times m}$, predicted instance $\hat{\instance} \in \mathbb{R}^{n \times m}$, scheduling algorithm \algo, prediction confidence parameter $\gamma \in (0, \frac{n}{2} - 1)$

$\hat{\schedule} \leftarrow \algo(\hat{\instance})$\\
$\poptm_j \leftarrow  $ the machine $i$ that job $j$ is assigned  to according to $\hat{\schedule}$, i.e., $\hat{x}(\poptm_j,j) = 1$, for each $j \in M$ \\
$r(i,j) \leftarrow \max\left( \frac{\hat{p}(\poptm_j,j)}{\hat{p}(i,j)},1\right)$ for each $(i,j) \in N \times M$ \label{line:update1}\\
$J_{i} \leftarrow \emptyset$ for all $i \in N$ \\
$I \leftarrow N$ \\
$T \leftarrow \{(i,j): j  \in M,  i = \argmin_{i' \in I : \hat{p}(i',j) < \hat{p}(\poptm_j, j)}\hat{p}(i',j)\}$\\
\While{$T$ is not empty}{
    $(i^*, j^*) \leftarrow \argmax_{(i,j) \in T} \frac{\hat{p}(\poptm_j,j)}{\hat{p}(i,j)}$ \\
        $J_{i^*} \leftarrow J_{i^*} \cup \{j^*\}$ \\
        \For{all $i$ s.t. $\hat{p}(i^*, j^*) \leq \hat{p}(i,j^*)$ \label{line:forloop}}{
             update $r(i, j^*) \leftarrow 1$ \label{line:update2}
        }
        $I \leftarrow \{i \in N : \sum_{j \in J_{i}}\hat{p}(i,j) < \tradeoff \ms(\hat{\instance},\hat{\schedule})\}$\label{line:Idef} \\
    $T \leftarrow \{(i,j): j  \in M \setminus (\cup_i J_i),  i = \argmin_{i' \in I : \hat{p}(i',j) < \hat{p}(\poptm_j, j)}\hat{p}(i',j)\}$ \label{line:Tcondition}\\
    }
$i_j \leftarrow \argmin_i  r(i,j) \cdot  p(i,j) $, break ties in favor of  $\poptm_j$ first and $i'$ if $j \in J_{i'}$ second, for each $j \in M$\label{line:tiebreaking}\\
\textbf{if} $i = i_j$ \textbf{then} $x(i, j) = 1$ \textbf{else} $x(i,j) = 0$, for each $(i,j) \in N \times M$ \\
\Return{$\mathbf{x}$}
\caption{\mech}
\label{mech:scaledgreedy}
\end{algorithm}
\vspace{.3cm}
$J_i$ is the set of jobs $j$ for which \mech \ assigns $j$ to $i$ when the predictions are correct even though their predicted assignment is not $i$. When \mech \ adds a job $j^{*}$ to $J_{i^*}$, it allows to set the scalars $r(i,j^*)$ to $r(i,j^*) = 1$, for all machines $i$ such that $\hat{p}(i^*, j^*) \leq \hat{p}(i,j^*)$, while guaranteeing that job $j^*$ is assigned to $i^*$ when the predictions are correct. The sets $J_i$ are initially empty. To maintain a good consistency, we bound the total predicted processing time of the jobs in a set $J_i$ by  $\tradeoff \ms(\hat{\instance},\hat{\schedule})$. $I$ is the set of machines for which this bound has not yet been violated, i.e., the machines $i$ that are such that we can still add jobs to $J_i$. $I$ initially consists of all the machines.  $T$ is the set of pairs $(i,j)$ such that job $j$ is a candidate to be added to $J_i$. A pair $(i,j)$ is in $T$ if (1) $j$ has not yet been assigned by \mech \ to another machine, i.e., $j \not \in \cup_i J_i$, (2) the upper bound on the total predicted processing time of the jobs already in $J_i$ has not yet been violated, i.e., $i \in I$, (3) $j$ has a smaller processing time on $i$ than on its predicted assignment $\poptm_j$, and (4) $i$ is the machine with minimum predicted processing time $\hat{p}(i,j)$ among the machines in $I$. At each iteration, job $j^{*}$ is added $J_{i^{*}}$ where $(i^{*}, j^{*})$ is the pair $(i,j)$ with maximum $\frac{\hat{p}(\poptm_j,j)}{\hat{p}(i,j)}$ in $T$.  For each job $j$, if there are multiple machines with minimum scaled processing time  $r(i,j) \cdot  p(i,j)$, ties are broken in favor of, in order of priority, (1) machine $\poptm_j$ and, if there is one, (2) the machine $i'$ such that $j \in J_{i'}$, and (3) the machine with minimum index.

\subsection{The analysis} 

We first analyze the mechanism's consistency and robustness, and finally show it is strategyproof. The following simple observation is used for both the consistency and the robustness analyses.
 
\begin{observation}\label{obs:rbiggerthan1}
For every machine $i \in N$ and job $j \in M$, at every iteration of \mech, we have $r(i,j) \geq  1$ and, if $\hat{p}(i,j) \geq \hat{p}(\poptm_j, j), r(i,j) =  1$. 
\end{observation}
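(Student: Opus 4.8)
The plan is to prove both parts by directly tracking how the scalars $r(i,j)$ are initialized and updated throughout an execution of \mech. First I would examine the initialization in line~\ref{line:update1}, which sets $r(i,j) \leftarrow \max\left(\frac{\hat{p}(\poptm_j,j)}{\hat{p}(i,j)},1\right)$. This value is at least $1$ by construction, so the first claim holds at the start of the execution. Moreover, whenever $\hat{p}(i,j) \geq \hat{p}(\poptm_j,j)$ we have $\frac{\hat{p}(\poptm_j,j)}{\hat{p}(i,j)} \leq 1$, so the maximum evaluates to $1$; hence the second claim also holds at the start.

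Next I would observe that the only place where any scalar $r(i,j)$ is modified after initialization is line~\ref{line:update2}, inside the \textbf{while} loop, where certain scalars $r(i,j^*)$ are reset to the value $1$ (namely for the machines $i$ with $\hat{p}(i^*,j^*) \leq \hat{p}(i,j^*)$). Crucially, this is the sole update to $r$ within the loop, and it never increases a scalar nor sets one below $1$. Therefore, by a straightforward induction on the number of \textbf{while}-loop iterations, every scalar $r(i,j)$ remains at least $1$ at every iteration, which proves the first part. For the second part, note that any scalar that is equal to $1$ at some point — in particular every $r(i,j)$ with $\hat{p}(i,j) \geq \hat{p}(\poptm_j,j)$, by the initialization argument above — can only ever be (re-)assigned the value $1$ by line~\ref{line:update2}, and is otherwise left unchanged, so it stays equal to $1$ throughout.

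Since the argument is a direct inspection of the mechanism's update rules, there is essentially no obstacle; the only point requiring care is to confirm that line~\ref{line:update2} is indeed the only location modifying $r$ after its initialization and that it exclusively assigns the value $1$, so that both invariants are preserved across all iterations of the loop.
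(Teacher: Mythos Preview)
Your proposal is correct and matches the reasoning the paper implicitly relies on; the paper states this as an observation without giving a proof, and your argument---checking the initialization in line~\ref{line:update1} and noting that the only subsequent update (line~\ref{line:update2}) assigns the value $1$---is exactly the intended justification.
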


\paragraph{The consistency.} As in the analysis of \simplemech, we assume that $\algo$   is an $\alpha$-approximation algorithm for makespan minimization and let $M_i = \{j \in M: x(i,j)  = 1\}$ and $\hat{M}_i = \{j \in M:  \hat{x}(i, j) = 1\}$ be the sets of jobs $j$  such that $j$ is assigned to machine $i$ by \mech \ and the predicted assignment, respectively.  To bound the total processing time of the jobs $M_i$ assigned to a machine $i$,   we again separately bound the total processing time of  $M_i \cap \hat{M}_i$ and $M_i \setminus \hat{M}_i$. The next lemma,  which bounds the total processing time of  $M_i \cap \hat{M}_i$ when the predictions are correct, is identical as Lemma~\ref{lem:boundcap} for \simplemech, and its proof is also identical to the proof of Lemma~\ref{lem:boundcap}.

\begin{lemma}
\label{lem:boundcap2}
Assume that $\hat{\instance} = \instance$, then, for all $i \in N$, $\sum_{j \in M_i \cap \hat{M}_i} p(i,j) \leq \alpha \OPT(\instance)$.
\end{lemma}

Next, to bound the total processing time of $M_i \setminus \hat{M}_i$, we first characterize the jobs in $M_i \setminus \hat{M}_i$ as being exactly the jobs in $J_i$.

\begin{lemma}\label{lem:jobassignment}
Assume that $\hat{\instance} = \instance$, then, for every machine $i \in N$, we have that $J_i = M_i \setminus \hat{M}_i$.
\end{lemma}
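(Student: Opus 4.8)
The plan is to show the two inclusions $J_i \subseteq M_i \setminus \hat{M}_i$ and $M_i \setminus \hat{M}_i \subseteq J_i$ separately, using throughout the assumption $\hat{\instance} = \instance$ so that $\hat{p}(i,j) = p(i,j)$ everywhere. First I would record a basic structural fact about the sets $J_i$ produced by the while-loop: each job $j$ enters at most one set $J_i$ (since once $j \in \cup_i J_i$, the condition in line~\ref{line:Tcondition} removes all pairs $(i,j)$ from $T$), and if $j \in J_{i^*}$ was added at the iteration selecting $(i^*,j^*)=(i,j)$, then at that moment $i$ was the index minimizing $\hat{p}(i',j)$ over machines $i' \in I$ with $\hat{p}(i',j) < \hat{p}(\poptm_j,j)$, and the for-loop of line~\ref{line:forloop} set $r(i',j) \leftarrow 1$ for every $i'$ with $\hat{p}(i,j) \le \hat{p}(i',j)$, in particular $r(i,j)=1$ and $r(\poptm_j,j)=1$ (the latter because $\hat{p}(i,j) < \hat{p}(\poptm_j,j)$).

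For the inclusion $J_i \subseteq M_i \setminus \hat{M}_i$: fix $j \in J_i$. Since $j$ was a candidate added via a pair in $T$, we had $\hat{p}(i,j) < \hat{p}(\poptm_j,j)$, hence $\poptm_j \neq i$, so $j \notin \hat{M}_i$. It remains to show $j \in M_i$, i.e.\ that the final greedy step assigns $j$ to $i$. By the fact above, $r(i,j) = 1$, so the scaled cost of $i$ for $j$ is $p(i,j)$. For any other machine $i'$: if $\hat{p}(i',j) \ge \hat{p}(i,j)$ then $r(i',j) \ge 1$ gives scaled cost $r(i',j)p(i',j) \ge p(i',j) \ge p(i,j)$; if instead $\hat{p}(i',j) < \hat{p}(i,j)$, then since at the selection iteration $i$ was the $I$-minimizer of $\hat p(\cdot,j)$ among machines cheaper than $\poptm_j$, such an $i'$ cannot have been in $I$ at that time — but $I$ only shrinks over iterations, so $i' \notin I$ at the end either; I then need the auxiliary claim that machines that leave $I$ have $r(\cdot, j)$ set appropriately, OR more directly I should argue using the choice of $(i^*,j^*)$ as the pair maximizing $\hat{p}(\poptm_j,j)/\hat{p}(i,j)$ — this maximality is exactly what forces the scalars on competing cheaper machines to have been reset to $1$ in an earlier iteration (when that machine was still in $I$), so their scaled cost is their true cost $p(i',j)$, which is $\ge p(i,j)$ only if... — here I must be careful, and I'd reconcile it with the tie-breaking rule in line~\ref{line:tiebreaking}, which gives priority to $\poptm_j$ and then to the machine $i'$ with $j \in J_{i'}$, i.e.\ to $i$ itself over generic machines. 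This reconciliation of the greedy comparison against machines cheaper than $i$ is the step I expect to be the main obstacle, and it is where the max-selection rule and the definition of $T$ (restricting to $I$-minimizers) must be used in tandem.

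For the reverse inclusion $M_i \setminus \hat{M}_i \subseteq J_i$: fix $j \in M_i$ with $\poptm_j \neq i$. I want $j \in J_i$. Suppose not. If $j \notin \cup_{i'} J_{i'}$ at all, then $r(i',j) = \max(\hat p(\poptm_j,j)/\hat p(i',j), 1)$ for all $i'$ as set in line~\ref{line:update1}, and one checks that the scaled cost $r(i',j) p(i',j) = \max(\hat p(\poptm_j,j), p(i',j))$ is minimized (with the given tie-breaking toward $\poptm_j$) exactly at $\poptm_j$, contradicting $j \in M_i$ with $i \neq \poptm_j$ — unless $i$ also attains the minimum, which would require $\hat p(i,j) \ge \hat p(\poptm_j,j)$; but then the while-loop would have had $(i',j)$ in $T$ for the $I$-minimizer $i'$ among machines strictly cheaper than $\poptm_j$, so either some such cheaper machine got a job-$j$ reset (forcing $j \in \cup J_{i'}$, handled next) or there is no machine cheaper than $\poptm_j$, in which case the predicted assignment $\poptm_j$ is itself a minimizer and tie-breaking sends $j$ there. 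If instead $j \in J_{i'}$ for some $i' \neq i$, then by the first inclusion $j \in M_{i'}$, contradicting $j \in M_i$ and uniqueness of the greedy assignment. So in all cases $j \in J_i$. Finally I would double-check that these arguments only used $\hat{\instance}=\instance$ through the identity $\hat p = p$ and the fact that the scalars, $J_i$, $I$, $T$ are all functions of $\hat{\instance}$ alone, so the while-loop's bookkeeping (in terms of $\hat p$) transfers verbatim to statements about $p$ in the final assignment step.
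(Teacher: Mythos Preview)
Your overall two-inclusion structure matches the paper's, and your reverse inclusion $M_i \setminus \hat{M}_i \subseteq J_i$ is essentially correct (the detour about ``unless $i$ also attains the minimum'' is unnecessary, since tie-breaking sends $j$ to $\poptm_j$ regardless). The genuine gap is in the forward inclusion, precisely at the case you flagged as the main obstacle: machines $i'$ with $\hat p(i',j) < \hat p(i,j)$ when $j \in J_i$.

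Your proposed resolution is backwards. You conjecture that the scalar $r(i',j)$ for such a cheaper machine $i'$ must have been reset to $1$ in some earlier iteration; but if that were so, its scaled cost would be $p(i',j) < p(i,j)$ and $j$ would go to $i'$, not $i$, so this line cannot possibly close the argument. Neither your observation that $i' \notin I$ at the selection time (which is true but irrelevant to whether $r(i',j)$ was updated) nor the appeal to the max-ratio selection rule (which compares pairs across \emph{different} jobs, not different machines for the same $j$) gets you there. The correct argument is the opposite: $r(i',j)$ is \emph{never} updated. Updates to $r(\cdot,j)$ occur only in the for-loop of line~\ref{line:forloop} at an iteration where $j^\star = j$; since each job is placed in at most one $J_{i^\star}$, this happens exactly once, with $i^\star = i$, and the for-loop then touches only machines $i''$ with $\hat p(i,j) \le \hat p(i'',j)$. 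Since $\hat p(i',j) < \hat p(i,j)$, the scalar $r(i',j)$ stays at its initialized value $\hat p(\poptm_j,j)/\hat p(i',j)$, so its scaled cost is exactly $\hat p(\poptm_j,j) > \hat p(i,j) = r(i,j)p(i,j)$. That is the missing idea.
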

\begin{proof}
Consider a machine $i \in N$ and a job $j \in J_i$. We want to show that $j \in M_i \setminus \hat{M}_i$. Note that since $j \in J_i$, we had $(i,j) \in T$ at some iteration of the mechanism, which implies by Line~\ref{line:Tcondition} that $\hat{p}(i,j) < \hat{p}(\poptm_j, j)$. Since $\hat{p}(i,j) < \hat{p}(\poptm_j, j)$, we have that $i \neq \poptm_j$. Since $\hat{x}(\poptm_j, j) = 1$ by definition of $\poptm_j$, we have that $\hat{x}(i,j) = 0$, which implies $j \not \in \hat{M}_i$. 

Next, we show that $j \in M_i$, which we show by proving that $r(i,j) p(i,j) \leq r(i', j) p(i',j)$ for all $i' \not \in \{i, \poptm_j\}$  and $r(i,j) p(i,j) < r(\poptm_j, j) p(\poptm_j,j)$. There are two cases of machines $i' \not \in  \{i, \poptm_j\}$.  The first case is if $i'$ is such that $\hat{p}(i',j) < \hat{p}(i,j)$, which is the main part of the proof. Since $\hat{p}(i,j) < \hat{p}(\poptm_j, j)$ (which was shown earlier in this proof), we have $\hat{p}(i',j) < \hat{p}(\poptm_j, j)$, so $r(i',j)$ was initialized to $r(i',j) = \frac{\hat{p}(\poptm_j,j)}{\hat{p}(i',j)}$ in line~\ref{line:update1}. For all $i'' \neq i$, we have that $j \not \in J_{i''}$ since, by line~\ref{line:Tcondition}, $(i'', j) \not \in T$ if $j \in J_i$, which implies that $j$ cannot be added to $J_{i''}$. Thus, the only iteration of the mechanism where $j^{\star} = j$ is when $i^{\star} = i$, and since $\hat{p}(i',j) < \hat{p}(i,j)$, this implies that $r(i', j)$ is never updated in line~\ref{line:update2}. Together, with the fact that it was initialized to $r(i',j) = \frac{\hat{p}(\poptm_j,j)}{\hat{p}(i',j)}$, we have that $r(i',j) = \frac{\hat{p}(\poptm_j,j)}{\hat{p}(i',j)}$ when the mechanism terminated. Thus, when $\instance = \hat{\instance}$, we have
\begin{align}\label{eq:si1forsure}
    r(i,j) p(i,j)   = 1\cdot \hat{p}(i',j)  < \hat{p}(\poptm_j, j) =\frac{\hat{p}(\poptm_j, j)}{\hat{p}(i',j)} \cdot \hat{p}(i',j) = r(i',j) p(i',j).
\end{align}
where the first equality is since $r(i,j)=1$ (because $j \in J_i$). The second case of $i' \not \in \{i, \poptm_j\}$ is if $i'$ such that $\hat{p}(i',j) \geq \hat{p}(i,j)$. In this case, we have that
\begin{align}\label{eq:si1forsure1}
  p(i,j) \cdot r(i,j) = \hat{p}(i,j) \leq \hat{p}(i',j) \leq p(i',j) \cdot r(i',j),  
\end{align}
where the last inequality is by Observation~\ref{obs:rbiggerthan1}. Next, note that
\begin{align}\label{eq:si1forsure2}
p(i,j) \cdot r(i,j) = \hat{p}(i,j) < \hat{p}(\poptm_j,j) = p(\poptm_j,j) \cdot r(\poptm_j, j), 
\end{align}
where the inequality is since, at some iteration of the mechanism, $(i,j) \in T$ and the last equality since  $r(\poptm_j, j) = 1.$
Since we have shown that $r(i,j) p(i,j) \leq r(i', j) p(i',j)$ for all $i' \not \in \{i, \poptm_j\}$  and $r(i,j) p(i,j) < r(\poptm_j, j) p(\poptm_j,j)$, we have that, by line~\ref{line:tiebreaking} of the mechanism, $x(i,j) = 1$, which implies that $j \in M_i$. Since we have shown that $j \not \in \hat{M}_i$ and $j \in M_i$, we get $j \in M_i \setminus \hat{M}_i$.

It remains to show that if $j \in  M_i \setminus \hat{M}_i$, then $j \in J_i$. Consider an arbitrary job $j \in \ M_i \setminus \hat{M}_i$. Since $j \not \in \hat{M}_i$, $i \neq \poptm_j.$ Since $j \in M_i$ and $i \neq \poptm_j$, we have that $$r(i,j) p(i,j) < r(\poptm_j, j) p(\poptm_j, j) = p(\poptm_j,j)$$ where the inequality is by line~\ref{line:tiebreaking} of the mechanism and the equality since  $r(\poptm_j, j) = 1$. Thus, $r(i,j) < \frac{p(\poptm_j,j)}{ p(i,j)}$, which implies $r(i,j) = 1$. Thus there is some iteration of the mechanism where $j^{\star} = j$ got added to $J_{i^{\star}}$ for some machine $i^{\star}$. Assume by contradiction that $i^* \neq i$, so $j \in J_{i^{\star}}$ for $i^* \neq i$. Then, by the first part of this proof, we have that  $j \in M_{i^{\star}}$, which is a contradiction with  $j \in M_i$. Thus, $i^* = i$ and $j \in J_i$.
\end{proof}

The next lemma bounds the total processing time of jobs in $M_i \setminus \hat{M}_i$ by using the fact that $J_i = M_i \setminus \hat{M}_i$ and by bounding the total processing time of jobs in $J_i$.

\begin{lemma}\label{lem:upperboundvalforsi}
For any input parameter $\tradeoff \geq 0$, then, for every machine $i \in N$, if $J_i = M_i \setminus \hat{M_i}$, we have that $\sum_{j \in M_i \setminus \hat{M}_i} \hat{p}(i,j) < (1+\tradeoff)\alpha \OPT(\hat{\instance}).$
\end{lemma}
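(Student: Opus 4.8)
The plan is to bound $\sum_{j \in J_i}\hat p(i,j)$ directly for every machine $i$, since the hypothesis $J_i = M_i \setminus \hat M_i$ then immediately turns this into the desired bound on $\sum_{j \in M_i \setminus \hat M_i}\hat p(i,j)$. Concretely, I would aim to show that at termination $\sum_{j\in J_i}\hat p(i,j) < (1+\tradeoff)\,\ms(\hat\instance,\hat\schedule)$ for every $i \in N$, and then finish by recalling that $\hat\schedule = \algo(\hat\instance)$ with $\algo$ an $\alpha$-approximation, so $\ms(\hat\instance,\hat\schedule) \le \alpha\,\OPT(\hat\instance)$.

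The heart of the argument is the invariant that, at the end of every iteration of the while loop, $\sum_{j\in J_i}\hat p(i,j) < (1+\tradeoff)\,\ms(\hat\instance,\hat\schedule)$ holds for all $i\in N$; I would prove this by induction on the iterations. The base case is trivial since each $J_i$ starts empty. For the inductive step, note that a single iteration changes only the set $J_{i^*}$ of the selected machine, by adding one job $j^*$, so it suffices to control $J_{i^*}$. By the way $T$ is formed in Line~\ref{line:Tcondition} together with the update of $I$ in Line~\ref{line:Idef}, membership $(i^*,j^*)\in T$ forces $i^*\in I$, i.e., the bound $\sum_{j\in J_{i^*}}\hat p(i^*,j) < \tradeoff\,\ms(\hat\instance,\hat\schedule)$ held for the value of $J_{i^*}$ at the start of the iteration. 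It then remains to bound the ``overflow'' $\hat p(i^*,j^*)$: membership $(i^*,j^*)\in T$ also gives $\hat p(i^*,j^*) < \hat p(\poptm_{j^*},j^*)$, and since $j^*$ is assigned to $\poptm_{j^*}$ in $\hat\schedule$ we have $\hat p(\poptm_{j^*},j^*)\le \ms(\hat\instance,\hat\schedule)$. Combining, $\sum_{j\in J_{i^*}\cup\{j^*\}}\hat p(i^*,j) < \tradeoff\,\ms(\hat\instance,\hat\schedule) + \ms(\hat\instance,\hat\schedule) = (1+\tradeoff)\,\ms(\hat\instance,\hat\schedule)$, which closes the induction. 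One subtlety to handle: at the very first iteration $I$ equals $N$ rather than the set cut out by the $\tradeoff$-bound, but there $J_{i^*}$ is empty and the overflow bound $\hat p(i^*,j^*) < \ms(\hat\instance,\hat\schedule)\le(1+\tradeoff)\,\ms(\hat\instance,\hat\schedule)$ already suffices.

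Since the sets $J_i$ only grow over the execution, the invariant at termination yields $\sum_{j\in J_i}\hat p(i,j) < (1+\tradeoff)\,\ms(\hat\instance,\hat\schedule)$ for every $i$. Plugging in the assumption $J_i = M_i\setminus\hat M_i$ and $\ms(\hat\instance,\hat\schedule)\le\alpha\,\OPT(\hat\instance)$ gives $\sum_{j\in M_i\setminus\hat M_i}\hat p(i,j) < (1+\tradeoff)\alpha\,\OPT(\hat\instance)$, as claimed.

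The main obstacle is getting the bookkeeping in the inductive step exactly right: one must match the version of $J_{i^*}$ that the condition $i^*\in I$ certifies (the one left over from the previous iteration's update of $I$) with the version to which $j^*$ is being added, and then observe that the sum can overshoot $\tradeoff\,\ms(\hat\instance,\hat\schedule)$ only by the predicted processing time of that one extra job, which is itself strictly below $\hat p(\poptm_{j^*},j^*)\le\ms(\hat\instance,\hat\schedule)$. Everything else is a one-line computation.
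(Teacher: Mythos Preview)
The proposal is correct and takes essentially the same approach as the paper: both bound $\sum_{j\in J_i}\hat p(i,j)$ by splitting off the contribution of one ``overflow'' job, using $i^*\in I$ to bound the rest by $\tradeoff\,\ms(\hat\instance,\hat\schedule)$ and using $\hat p(i^*,j^*)<\hat p(\poptm_{j^*},j^*)\le\ms(\hat\instance,\hat\schedule)$ to bound the overflow. The only cosmetic difference is that the paper looks directly at the last job $j'$ added to $J_i$ rather than carrying an invariant through all iterations, which is slightly more economical but amounts to the same argument.
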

\begin{proof}
Consider an arbitrary machine $i$ and
let $j'$ denote the last job added into $J_i$ by the mechanism. First observer that
$$   \sum_{j \in M_i \setminus \hat{M}_i} \hat{p}(i,j) = \sum_{j \in J_i} \hat{p}(i,j) = \sum_{j \in J_i \setminus \{j'\}} \hat{p}(i,j) + \hat{p}(i,j') < \tradeoff\ms(\hat{\instance},\hat{\schedule}) + \hat{p}(i,j')  < \tradeoff\ms(\hat{\instance},\hat{\schedule}) + \hat{p}(\poptm_{j'},j')  $$
where the first inequality is by Line~\ref{line:Idef} and Line~\ref{line:Tcondition} of the mechanism, and the last inequality is  since $j' \in J_i$. Next, we have
\begin{align*}
 \tradeoff\ms(\hat{\instance},\hat{\schedule}) + \hat{p}(\poptm_{j'},j')    \leq (1+\tradeoff)\ms(\hat{\instance},\hat{\schedule}) \leq (1+\tradeoff)\alpha \OPT(\hat{\instance})
\end{align*}
where the first inequality is since the makespan of an assignment is at least the processing time of a job $j$ on a machine $i$, for any $j$ assigned to $i$, the second inequality since \algo \ is an $\alpha$-approximation algorithm.
\end{proof}

By combining Lemma~\ref{lem:boundcap2} to bound $\sum_{j \in M_i \cap \hat{M_i}} p(i,j)$ and  Lemma~\ref{lem:upperboundvalforsi} to bound $\sum_{j \in M_i \setminus \hat{M_i}} p(i,j)$, the next lemma bounds $\sum_{j \in M_i} p(i,j)$  by $(2+\tradeoff)\alpha$ for any machine $i$, which implies that \mech \ is  $(2+\tradeoff)\alpha$-consistent.

\begin{lemma}\label{lem:consistency}
For any input parameter $\tradeoff \geq 0$,  the \mech\ mechanism is $(2+\tradeoff)\alpha$-consistent.
\end{lemma}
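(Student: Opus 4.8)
The plan is to combine the two bounds already established for the jobs assigned to a machine $i$ when the prediction is correct. Assuming $\hat{\instance} = \instance$, we write, for an arbitrary machine $i$,
\[
\sum_{j \in M_i} p(i,j) = \sum_{j \in M_i \cap \hat{M}_i} p(i,j) + \sum_{j \in M_i \setminus \hat{M}_i} p(i,j),
\]
since $M_i \cap \hat{M}_i$ and $M_i \setminus \hat{M}_i$ partition $M_i$. The first sum is at most $\alpha \OPT(\instance)$ directly by Lemma~\ref{lem:boundcap2}. For the second sum, I would first invoke Lemma~\ref{lem:jobassignment}, which (under $\hat{\instance} = \instance$) gives $J_i = M_i \setminus \hat{M}_i$; this is exactly the hypothesis needed to apply Lemma~\ref{lem:upperboundvalforsi}. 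Applying that lemma yields $\sum_{j \in M_i \setminus \hat{M}_i} \hat{p}(i,j) < (1+\tradeoff)\alpha \OPT(\hat{\instance})$, and since $\hat{\instance} = \instance$ we have $\hat{p}(i,j) = p(i,j)$ and $\OPT(\hat{\instance}) = \OPT(\instance)$, so $\sum_{j \in M_i \setminus \hat{M}_i} p(i,j) < (1+\tradeoff)\alpha \OPT(\instance)$.

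Adding the two estimates gives $\sum_{j \in M_i} p(i,j) < \alpha \OPT(\instance) + (1+\tradeoff)\alpha \OPT(\instance) = (2+\tradeoff)\alpha \OPT(\instance)$. Taking the maximum over $i \in N$ bounds $\ms(\instance,\schedule) = \max_i \sum_{j \in M_i} p(i,j)$ by $(2+\tradeoff)\alpha \OPT(\instance)$, which is precisely the statement that the mechanism is $(2+\tradeoff)\alpha$-consistent. The one bookkeeping point to be careful about is the direction of the substitution $\hat{\instance} = \instance$: Lemma~\ref{lem:upperboundvalforsi} is phrased in terms of $\hat{p}$ and $\OPT(\hat{\instance})$, so I need to explicitly note that under the consistency assumption these equal $p$ and $\OPT(\instance)$ respectively before combining.

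There is essentially no hard step here — the lemma is a pure assembly of Lemmas~\ref{lem:boundcap2}, \ref{lem:jobassignment}, and \ref{lem:upperboundvalforsi}. If anything, the only subtlety worth flagging is that Lemma~\ref{lem:upperboundvalforsi} takes $J_i = M_i \setminus \hat{M}_i$ as a hypothesis rather than deriving it, so the proof must explicitly chain through Lemma~\ref{lem:jobassignment} to discharge that hypothesis; forgetting this would leave a gap. Everything else is a one-line addition of two inequalities and a maximization over machines.
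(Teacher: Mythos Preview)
Your proposal is correct and matches the paper's proof essentially line for line: partition $M_i$ into $M_i\cap\hat M_i$ and $M_i\setminus\hat M_i$, bound the first by Lemma~\ref{lem:boundcap2}, invoke Lemma~\ref{lem:jobassignment} to get $J_i=M_i\setminus\hat M_i$ so that Lemma~\ref{lem:upperboundvalforsi} bounds the second, then add. Your explicit remark about converting $\hat p$ and $\OPT(\hat\instance)$ back to $p$ and $\OPT(\instance)$ under $\hat\instance=\instance$ is exactly the ``assumption that $\instance=\hat\instance$'' step the paper cites without spelling out.
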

\begin{proof}
Assume that $\instance = \hat{\instance}$ and consider an arbitrary machine $i$, then we have $J_i = M_i \setminus \hat{M}_i$ by Lemma~\ref{lem:jobassignment}. Thus
\[ \sum_{j \in M_i} p(i,j) = \sum_{j \in M_i \cap \hat{M}_i} p(i,j)  +\sum_{j \in M_i\setminus\hat{M}_i} p(i,j) \leq  \alpha \OPT(\instance) +(1+\tradeoff)\alpha \OPT(\instance)  = (2+\tradeoff)\alpha\OPT,\]
where the inequality is by Lemma~\ref{lem:boundcap2}, Lemma~\ref{lem:upperboundvalforsi} and the assumption that $\instance = \hat{\instance}$.
\end{proof}

\paragraph{The robustness.} Recall that the main reason that \simplemech \ achieved bounded robustness is thanks to the scalars being individually upper bounded by $n$. To show that \mech \ achieves linear robustness, we give a stronger upper bound on the sum of the largest scalar on each machine. More precisely, in Lemma~\ref{lem:sumscalar}, we show that $\sum_{i : \max_{j}r(i,j) > 1} \max_{j}r(i,j) \leq \frac{n}{\tradeoff}$, which is the main lemma to show the linear robustness. To prove Lemma~\ref{lem:sumscalar}, we first need two helper lemmas. The first is that for a fixed machine $i$, the jobs added to $J_i$, i.e. the jobs $j$ with scalars $r(i,j)$ modified from $\frac{\hat{p}(\poptm_j,j)}{\hat{p}(i,j)}$ to $1$ are the jobs with largest ratio $\frac{\hat{p}(\poptm_j,j)}{\hat{p}(i,j)}$.

\begin{lemma}\label{lem:decreasingratio}
For any machine $i \in N$, job $j \in J_i$, and job $j' \not \in J_i$, if $r(i, j') > 1$, then
\[\frac{\hat{p}(\poptm_j,j)}{\hat{p}(i,j)} \geq \frac{\hat{p}(\poptm_{j'},j')}{\hat{p}(i,j')}.\]
\end{lemma}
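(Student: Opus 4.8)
The plan is to track exactly when and how the scalars $r(i,j)$ are modified for a fixed machine $i$, and argue that the while-loop processes pairs $(i,j)$ in \emph{decreasing} order of the ratio $\frac{\hat{p}(\poptm_j,j)}{\hat{p}(i,j)}$, so that jobs added to $J_i$ (whose scalar becomes $1$) always have larger such ratio than any job whose scalar is still strictly above $1$. Concretely, fix $i$, a job $j \in J_i$, and a job $j' \notin J_i$ with $r(i,j') > 1$. Since $r(i,j') > 1$, Observation~\ref{obs:rbiggerthan1} gives $\hat{p}(i,j') < \hat{p}(\poptm_{j'}, j')$ and the scalar $r(i,j')$ was initialized in line~\ref{line:update1} to $\frac{\hat{p}(\poptm_{j'},j')}{\hat{p}(i,j')}$ and \emph{never} reset to $1$ in line~\ref{line:update2}; in particular $j'$ was never the job $j^\star$ added while $i^\star = i$, and moreover at no iteration did line~\ref{line:update2} with some $j^\star = j'$ on another machine $i^\star$ with $\hat{p}(i^\star, j') \leq \hat{p}(i,j')$ fire (otherwise $r(i,j')$ would be $1$).

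The key step is to consider the iteration at which $j$ was added to $J_i$, i.e.\ the iteration where $(i^\star, j^\star) = (i, j)$. I would show that at the \emph{start} of that iteration, the pair $(i, j')$ — or more precisely, whichever pair $(i'', j')$ was the representative of job $j'$ in $T$ at that moment — was already a candidate competing with $(i,j)$, and hence $(i,j)$ was selected because its ratio was at least as large. The subtle point is that $j'$ might be represented in $T$ by a \emph{different} machine than $i$: $T$ contains the pair $(i'', j')$ where $i''$ is the $I$-machine minimizing $\hat{p}(\cdot, j')$ among those with $\hat{p}(\cdot, j') < \hat{p}(\poptm_{j'}, j')$. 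But $i$ itself satisfies $\hat{p}(i,j') < \hat{p}(\poptm_{j'},j')$ (shown above) — wait, that is not quite given; what is given is $r(i,j') > 1$, which by Observation~\ref{obs:rbiggerthan1} does force $\hat{p}(i,j') < \hat{p}(\poptm_{j'},j')$. So, provided $i \in I$ at that iteration (which it is, since $(i,j) \in T$ means $i = \argmin_{i' \in I : \ldots}$, so $i \in I$), the representative $i''$ of $j'$ satisfies $\hat{p}(i'', j') \leq \hat{p}(i, j')$, hence
\[
\frac{\hat{p}(\poptm_{j'},j')}{\hat{p}(i'',j')} \;\geq\; \frac{\hat{p}(\poptm_{j'},j')}{\hat{p}(i,j')}.
\]
Then, because $(i^\star, j^\star) = (i,j)$ was chosen as the argmax over $T$ and $(i'', j') \in T$ at that iteration (it is, since $j' \notin \cup_k J_k$ yet — $j'$ is never added to any $J_k$ — and $i'' \in I$), we get $\frac{\hat{p}(\poptm_j,j)}{\hat{p}(i,j)} \geq \frac{\hat{p}(\poptm_{j'},j')}{\hat{p}(i'',j')} \geq \frac{\hat{p}(\poptm_{j'},j')}{\hat{p}(i,j')}$, which is the claim.

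The main obstacle I anticipate is the bookkeeping around $T$: one must verify that at the iteration where $j$ is added, the job $j'$ really does have \emph{some} representative pair in $T$ — i.e.\ that $j' \notin \cup_k J_k$ at that point (true, since $j' \notin J_i$ by hypothesis and, as argued, $j'$ belongs to no $J_k$ because $r(i,j')$ stayed above $1$, though one should note membership of $j'$ in some other $J_k$ would still leave $r(i,j')$ possibly untouched — this needs a careful check via line~\ref{line:update2}'s condition $\hat{p}(i^\star,j^\star) \le \hat{p}(i,j^\star)$), and that the representative machine $i''$ lies in $I$ with $\hat{p}(i'',j') < \hat{p}(\poptm_{j'},j')$. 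A clean way around the "$j' \in$ some other $J_k$" worry: if $j' \in J_k$ for $k \neq i$, then at the iteration adding $j'$ to $J_k$, line~\ref{line:forloop}–\ref{line:update2} runs over all $i$ with $\hat{p}(k,j') \leq \hat{p}(i,j')$; since $k$ minimized $\hat{p}(\cdot,j')$ over $I \ni k$ and (if $i \in I$ then) $\hat{p}(k,j') \le \hat{p}(i,j')$, the scalar $r(i,j')$ \emph{would} be set to $1$ — contradicting $r(i,j') > 1$ — unless $i \notin I$ at that time; but once $i$ leaves $I$ it never returns, and $j$ was added to $J_i$ which requires $i \in I$, forcing the $j$-iteration to come before the $j'$-iteration, at which point $j'$ is still unassigned. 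Assembling these observations in the right order is the only real work; each individual implication is immediate from the mechanism's definition.
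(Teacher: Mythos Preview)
Your proposal is correct and takes essentially the same route as the paper: both fix the iteration $t$ where $j$ enters $J_i$, use $i \in I$ at $t$ to exhibit a representative $(i'',j') \in T$ with $\hat{p}(i'',j') \le \hat{p}(i,j')$ (handling the possibility $j' \in J_k$ for some $k \neq i$ by showing such an event would either force $r(i,j')=1$ or occur after $t$), and then apply the $\argmax$ selection rule. The paper frames this as a proof by contradiction while you argue directly, but the key steps are identical.
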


\begin{proof}
Assume for contradiction that $\frac{\hat{p}(\poptm_{j'},{j'})}{\hat{p}(i,{j'})} > \frac{\hat{p}(\poptm_j,j)}{\hat{p}(i,j)}$ for some $j \in J_i$ and $j' \not \in J_i$ such that  $r(i, j') > 1$. Since $j \in J_i$, we have that $(i,j) \in T$ at some iteration of the mechanism. We let the iteration when $j$ is added to $J_i$ by the mechanism be $t$. First note that $\hat{p}(i,{j'}) < \hat{p}(\poptm_{j'},{j'})$, otherwise $\frac{\hat{p}(\poptm_{j'},{j'})}{\hat{p}(i,{j'})} \leq 1 < \frac{\hat{p}(\poptm_j,j)}{\hat{p}(i,j)}$ (where the strict inequality is since $j \in J_i$), which is a contradiction. 

 Since $j$ was added to $J_i$ at iteration $t$, we have $i \in I$ at iteration $t$, which implies $i \in I$ at all iterations $t' \leq t$. There are two cases. If, at iteration $t$, $j' \in J_{i'}$ for some machine $i'$, then $j'$ was added to  $J_{i'}$ at some iteration $t' < t$. Thus, $(i', j') \in T$ at iteration $t'$. By line~\ref{line:Tcondition}, this implies that $\hat{p}(i', j') \leq \hat{p}(i,j')$ since $i \in I$ at iteration $t'$. By line~\ref{line:update2}, $j'$ added to  $J_{i'}$ and $\hat{p}(i', j') \leq \hat{p}(i,j')$ imply $r(i, j') = 1$, which is a contradiction. 
Thus, at iteration $t$, $j' \not \in \cup_{i'} J_{i'}$. 

Since $\hat{p}(i,{j'}) < \hat{p}(\poptm_{j'},{j'})$ and $j' \not \in \cup_{i'} J_{i'}$,  there exists machine $i'$ such that $(i', j') \in T$ at iteration $t$ and such that $\hat{p}(\poptm_{j'},{j'}) \leq \hat{p}(i,{j'})$. We get that
$$\frac{\hat{p}(\poptm_{j'}, j')}{\hat{p}(i', j')} \geq \frac{\hat{p}(\poptm_{j'}, j')}{\hat{p}(i, j')} > \frac{\hat{p}(\poptm_j,j)}{\hat{p}(i,j)}$$
where the inequality is by the assumption of the proof.
Since $(i', j') \in T$  and $\frac{\hat{p}(\poptm_{j'}, j')}{\hat{p}(i', j')} > \frac{\hat{p}(\poptm_j,j)}{\hat{p}(i,j)},$ we have that $(i^*, j^*) \neq (i,j)$ at iteration $t$, which is a contradiction with $j$ that is added to $J_i$ at iteration $t$.
\end{proof}

The second helper lemma shows that if there is a job $j$ with scalar $r(i,j) > 1$ with respect to machine $i$, i.e. a scalar that the mechanism did not decrease to $1$, then we have that $\sum_{j' \in J_i}\hat{p}(i,j') \geq \tradeoff \ms(\hat{\instance}, \hat{\schedule})$, i.e., the upper bound $\tradeoff \ms(\hat{\instance}, \hat{\schedule})$ on the total processing time of jobs in $J_i$ has been violated. Informally, the scalar $r(i,j)$ couldn't be decreased to $1$ because $J_i$ was already full.

\begin{lemma}\label{lem:setasideopt} For any $\tradeoff > 0$, when \mech \ terminates, for any machine $i$, if there is a job $j$ such that $r(i,j)>1$,  then $\sum_{j' \in J_i}\hat{p}(i,j') \geq \tradeoff \ms(\hat{\instance}, \hat{\schedule})$.
\end{lemma}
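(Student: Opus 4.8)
The plan is to argue by contradiction: suppose that when the mechanism terminates there is some machine $i$ and some job $j$ with $r(i,j) > 1$, yet $\sum_{j' \in J_i} \hat{p}(i,j') < \tradeoff \ms(\hat{\instance}, \hat{\schedule})$. The goal is to show that the pair $(i,j)$ — or some pair $(i',j)$ closely related to it — should still have been available in $T$ at the end, contradicting the termination condition that $T$ is empty.

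First I would establish that $\hat{p}(i,j) < \hat{p}(\poptm_j, j)$: by line~\ref{line:update1}, $r(i,j)$ is initialized to $\max\!\left(\frac{\hat{p}(\poptm_j,j)}{\hat{p}(i,j)},1\right)$, and the only way this value exceeds $1$ is if $\hat{p}(i,j) < \hat{p}(\poptm_j,j)$; moreover $r(i,j)$ is only ever decreased (to $1$ in line~\ref{line:update2}), never increased, so $r(i,j) > 1$ at termination forces the strict inequality on the predicted processing times. Next I would observe that $j \notin \cup_{i'} J_{i'}$: if $j$ had been added to some $J_{i'}$, then since $(i',j) \in T$ at that iteration and $i$ was in $I$ at that point (because the bound on $J_i$ is never violated, by our contradiction hypothesis, so $i \in I$ throughout), line~\ref{line:Tcondition} would give $\hat{p}(i',j) \le \hat{p}(i,j)$, and then line~\ref{line:forloop}--\ref{line:update2} would have set $r(i,j) = 1$, a contradiction. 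So $j$ is unassigned at termination.

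Now consider the last iteration of the while loop (or the state just before the loop, if it never executed). Our contradiction hypothesis says $\sum_{j' \in J_i} \hat{p}(i,j') < \tradeoff \ms(\hat{\instance},\hat{\schedule})$ at termination, hence $i \in I$ at termination by the definition in line~\ref{line:Idef}. Since $j$ is unassigned, $i \in I$, and $\hat{p}(i,j) < \hat{p}(\poptm_j,j)$, the machine $i^{\dagger} = \argmin_{i' \in I : \hat{p}(i',j) < \hat{p}(\poptm_j,j)} \hat{p}(i',j)$ is well defined (the set over which we minimize contains $i$ and is nonempty), so the pair $(i^{\dagger}, j)$ satisfies all four conditions of line~\ref{line:Tcondition} and therefore belongs to $T$ at termination. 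This contradicts the fact that the while loop exits only when $T = \emptyset$. Therefore no such machine $i$ and job $j$ can exist, proving the lemma.

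The main obstacle is the careful bookkeeping around \emph{when} $i$ belongs to $I$: I must make sure that the contradiction hypothesis ($J_i$ underfull at termination) really does imply $i \in I$ at every relevant iteration, including the iterations at which other jobs $j'$ might have been assigned elsewhere — this is needed for the "$j$ is unassigned" step, where I invoke $i \in I$ at an earlier iteration $t'$. The key point making this go through is that $\sum_{j' \in J_i}\hat{p}(i,j')$ is nondecreasing over the course of the algorithm (jobs are only added to $J_i$, never removed), so if it is below $\tradeoff \ms(\hat{\instance},\hat{\schedule})$ at termination it was below that threshold at every earlier iteration as well, keeping $i \in I$ throughout. With that monotonicity in hand, the rest is a direct check of the four membership conditions for $T$.
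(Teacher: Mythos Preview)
Your proof is correct and follows essentially the same argument as the paper: assume for contradiction that $J_i$ is underfull, deduce $i \in I$ throughout, show $\hat{p}(i,j) < \hat{p}(\poptm_j,j)$ and $j \notin \cup_{i'} J_{i'}$, and conclude that some pair $(\cdot, j)$ remains in $T$ at termination. Your version is in fact slightly more explicit than the paper's about why $i \in I$ at every earlier iteration (via the monotonicity of $\sum_{j' \in J_i}\hat{p}(i,j')$), which the paper asserts in a single phrase.
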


\begin{proof}
Assume for contradiction, that there exist a machine $i$ with $r(i,j')>1$ for some job $j'$ and $\sum_{j \in J_i}\hat{p}(i,j)< \tradeoff\ms(\hat{\instance}, \hat{\schedule})$. In other words, $i \in I$ throughout the execution of the mechanism. Since $r_{i,j'}>1$, we have $\hat{p}(i,j') < \hat{p}(\poptm_{j'},{j'})$, otherwise $r_{i,j'}=1$ by line~\ref{line:update1}. If, at the end of the execution of \mech, $j' \in J_{i'}$ for some machine $i'$, then $j'$ was added to  $J_{i'}$ at some iteration $t'$. Thus, $(i', j') \in T$ at iteration $t'$. By line~\ref{line:Tcondition}, this implies that $\hat{p}(i', j') \leq \hat{p}(i,j')$ since $i \in I$ at iteration $t'$. By line~\ref{line:update2}, $j'$ added to  $J_{i'}$ and $\hat{p}(i', j') \leq \hat{p}(i,j')$ imply $r(i, j') = 1$, which is a contradiction. 
Thus, when \mech\ terminates, $j' \not \in \cup_{i'} J_{i'}$. 

Since $\hat{p}(i,{j'}) < \hat{p}(\poptm_{j'},{j'})$ and $j' \not \in \cup_{i'} J_{i'}$,  either there exists machine $i'$ such that $(i', j') \in T$ at the end of the execution and such that $\hat{p}(\poptm_{j'},{j'}) \leq \hat{p}(i,{j'})$ or $(i,j') \in T$. However, both cases can't happen since the mechanism only terminates when $T$ is empty.
\end{proof}

The next lemma is the main lemma for the robustness analysis, it upper bounds the sum of the largest scalars on each machine. To prove it, we use the two previous helper lemmas.

\begin{lemma}\label{lem:sumscalar} For any $\tradeoff > 0$, when \mech \ terminates, we have that
\[\sum_{i  : \max_j r(i,j) > 1} \max_j r(i,j) \leq \frac{n}{\tradeoff}.\]
\end{lemma}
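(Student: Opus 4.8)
The plan is to charge, for each machine $i$ with $\max_j r(i,j) > 1$, the quantity $\max_j r(i,j)$ against the total \emph{predicted} processing time of the jobs in $J_i$ as measured on their predicted machines, that is, against $\sum_{j \in J_i} \hat{p}(\poptm_j, j)$. Because a job leaves the candidate pool $T$ forever once it enters some $J_i$ (line~\ref{line:Tcondition}), the sets $\{J_i\}_{i \in N}$ are pairwise disjoint, so summing these charges over all machines collapses to at most $\sum_{j \in M} \hat{p}(\poptm_j, j) = \sum_{i \in N}\sum_{j \in \hat{M}_i}\hat{p}(i,j) \le n\cdot\ms(\hat{\instance},\hat{\schedule})$. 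Since the charge will carry a factor $1/(\tradeoff\,\ms(\hat{\instance},\hat{\schedule}))$, this gives exactly $n/\tradeoff$.

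The heart of the argument is, for a fixed machine $i$ with $\max_j r(i,j) > 1$, to lower bound $\sum_{j \in J_i}\hat{p}(\poptm_j,j)$ by $(\max_j r(i,j))\cdot\tradeoff\,\ms(\hat{\instance},\hat{\schedule})$. Let $j_i$ attain the maximum. Since $r(i,j_i) > 1$, it was never reset in line~\ref{line:update2}, so $r(i,j_i) = \hat{p}(\poptm_{j_i},j_i)/\hat{p}(i,j_i)$, and moreover $j_i \notin J_i$ (whenever a job is added to $J_i$, lines~\ref{line:forloop}--\ref{line:update2} with $i^* = i$ reset its scalar to $1$). First, Lemma~\ref{lem:setasideopt} (applied with the job $j_i$) gives $\sum_{j \in J_i}\hat{p}(i,j) \ge \tradeoff\,\ms(\hat{\instance},\hat{\schedule})$. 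Second, Lemma~\ref{lem:decreasingratio}, applied with the in-set job ranging over each $j \in J_i$ and the out-of-set job being $j_i$ (which has $r(i,j_i) > 1$), gives $\hat{p}(\poptm_j,j)/\hat{p}(i,j) \ge \hat{p}(\poptm_{j_i},j_i)/\hat{p}(i,j_i) = \max_j r(i,j)$, i.e. $\hat{p}(\poptm_j,j) \ge (\max_j r(i,j))\,\hat{p}(i,j)$ for every $j \in J_i$. Summing this last inequality over $j \in J_i$ and plugging in the first bound yields the desired per-machine inequality, hence $\max_j r(i,j) \le \frac{1}{\tradeoff\,\ms(\hat{\instance},\hat{\schedule})}\sum_{j \in J_i}\hat{p}(\poptm_j,j)$.

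Finally I would sum the per-machine inequality over all $i$ with $\max_j r(i,j) > 1$ and invoke disjointness of the $J_i$ together with $\sum_{j\in M}\hat{p}(\poptm_j,j) = \sum_{i\in N}\sum_{j\in\hat{M}_i}\hat{p}(i,j) \le n\,\ms(\hat{\instance},\hat{\schedule})$ to conclude. The step I expect to require the most care is the bookkeeping around Lemma~\ref{lem:decreasingratio}: one must confirm that $j_i \notin J_i$ and $r(i,j_i) > 1$ so that the lemma's hypotheses hold for the particular pairing used, and one should dispatch the degenerate case $\ms(\hat{\instance},\hat{\schedule}) = 0$ separately --- but there any $r(i,j) > 1$ would force $\hat{p}(\poptm_j,j) > 0$ and hence $\ms(\hat{\instance},\hat{\schedule}) > 0$, so the index set on the left-hand side is empty and the claim is trivial.
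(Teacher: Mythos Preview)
Your proposal is correct and follows essentially the same approach as the paper's proof: both use disjointness of the $J_i$ to bound $\sum_i \sum_{j\in J_i}\hat p(\poptm_j,j)$ by $n\,\ms(\hat{\instance},\hat{\schedule})$, invoke Lemma~\ref{lem:decreasingratio} to replace $\hat p(\poptm_j,j)$ by $(\max_j r(i,j))\,\hat p(i,j)$ for $j\in J_i$, and then use Lemma~\ref{lem:setasideopt} to pull out the factor $\tradeoff\,\ms(\hat{\instance},\hat{\schedule})$. Your presentation is in fact slightly more careful than the paper's, since you explicitly verify that the maximizer $j_i$ satisfies $j_i\notin J_i$ and $r(i,j_i)>1$ before applying Lemma~\ref{lem:decreasingratio}, and you handle the degenerate case $\ms(\hat{\instance},\hat{\schedule})=0$.
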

\begin{proof}
First note that the sets $J_i$ for $i \in N$ are disjoint, since once any $j$ is added to $J_i$ for some machine $i$, $j$ is no longer consider for $T$ by line~\ref{line:Tcondition}. We therefore have:
\begin{align*}
    \ms(\hat{\instance},\hat{\schedule}) &= \max_i \sum_{j: \poptm_j = i}\hat{p}(\poptm_j,j) \geq \frac{\sum_{j}\hat{p}(\poptm_j,j)}{n} \geq \frac{\sum_{i  : \max_j r(i,j) > 1} \sum_{j \in J_i}\hat{p}(\poptm_j,j)}{n},\numberthis \label{eq:sumofopt1}
\end{align*}
Where the first equation is by the definition of predicted machine, the first inequality is because the maximum is weakly greater than the average, and the last inequality is due to the fact that $J_i$ and $J_{i'}$ are disjoint for any two distinct machines $i,i'\in N$. By Lemma~\ref{lem:decreasingratio}, for any $j \in J_i$, we have 
\begin{align*}
    \hat{p}(\poptm_j,j) = \hat{p}(i,j) \cdot  \frac{\hat{p}(\poptm_j,j)}{\hat{p}(i,j)} \geq \hat{p}(i,j) \cdot \max_j r(i,j). \numberthis \label{eq:decreasingratio}
\end{align*}
Combining Inequalities \eqref{eq:sumofopt1} and \eqref{eq:decreasingratio} we get
\[\ms(\hat{\instance},\hat{\schedule}) \geq \frac{\sum_{i  : \max_j r(i,j) > 1} \max_j r(i,j) \sum_{j \in J_i}\hat{p}(i,j)}{n}.\]
By Lemma~\ref{lem:setasideopt} we know that for any machine $i$ such that $\max_j r(i,j) > 1$, $\sum_{j \in J_i}\hat{p}(i,j) \geq \tradeoff\ms(\hat{\instance},\hat{\schedule})$. Substitute it in the equation above we get that:
\[\ms(\hat{\instance},\hat{\schedule}) \geq \frac{\sum_{i  : \max_j r(i,j) > 1} \max_j r(i,j) \cdot \tradeoff \ms(\hat{\instance},\hat{\schedule})}{n} \quad \Rightarrow \quad \sum_{i  : \max_j r(i,j) > 1} \max_j r(i,j) \leq \frac{n}{\tradeoff}.\qedhere\]
\end{proof}
We are now ready to show the robustness of the algorithm. 
\begin{lemma}\label{lem:robustness}
For any $\tradeoff > 0$,  \mech~is a mechanism that is $(1+\frac{1}{\tradeoff})n$-robust.
\end{lemma}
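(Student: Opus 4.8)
The plan is to follow the template of the robustness proof for \simplemech\ (Lemma~\ref{lem:wrobustness}), replacing the crude per-scalar bound ``$r(i,j)\leq n$'' by the sharper \emph{aggregate} bound on the scalars supplied by Lemma~\ref{lem:sumscalar}. Fix an arbitrary instance $\instance$ and prediction $\hat{\instance}$, let $\schedule$ be the allocation returned by \mech, let $\schedule^\star$ be an optimal allocation for $\instance$, and for each job $j$ let $\optm_j$ denote the machine to which $\schedule^\star$ assigns $j$. Throughout, $r(i,j)$ denotes the \emph{final} value of the scalar (after the main loop of the mechanism terminates); these values depend only on $\hat{\instance}$, and by Observation~\ref{obs:rbiggerthan1} each satisfies $r(i,j)\geq 1$.

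First I would translate the greedy rule into a comparison with the optimal allocation: if $j\in M_i$, then line~\ref{line:tiebreaking} forces $r(i,j)\,p(i,j)=\min_{i'}r(i',j)\,p(i',j)\leq r(\optm_j,j)\,p(\optm_j,j)$, and since $r(i,j)\geq 1$ this yields $p(i,j)\leq r(\optm_j,j)\,p(\optm_j,j)$. Summing this inequality over all jobs and regrouping the right-hand side according to the optimal machine gives
\begin{align*}
\ms(\instance,\schedule) &\leq \sum_{i\in N}\sum_{j\in M_i} p(i,j) \leq \sum_{j\in M} r(\optm_j,j)\,p(\optm_j,j) = \sum_{i\in N}\sum_{j: \optm_j = i} r(i,j)\,p(i,j)\\
&\leq \sum_{i\in N}\Big(\max_{j} r(i,j)\Big)\sum_{j: \optm_j = i} p(i,j) \leq \OPT(\instance)\cdot\sum_{i\in N}\max_{j} r(i,j),
\end{align*}
where the last step uses that $\sum_{j: \optm_j = i} p(i,j)$ is exactly the load of machine $i$ under the optimal allocation and hence at most $\OPT(\instance)$.

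It then remains to bound $\sum_{i\in N}\max_{j}r(i,j)$. I would split the machines into the set $S=\{i:\max_j r(i,j)>1\}$ and its complement: the complement contains at most $n$ machines, each contributing exactly $1$, while $\sum_{i\in S}\max_j r(i,j)\leq n/\tradeoff$ by Lemma~\ref{lem:sumscalar}. Hence $\sum_{i\in N}\max_j r(i,j)\leq n+n/\tradeoff=(1+\tfrac{1}{\tradeoff})n$, and combining with the chain above gives $\ms(\instance,\schedule)\leq(1+\tfrac{1}{\tradeoff})n\cdot\OPT(\instance)$, which is the claim.

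Since Lemma~\ref{lem:sumscalar} is already available, there is no real obstacle left; the points that merely require care are (i) that the comparison $p(i,j)\leq r(\optm_j,j)p(\optm_j,j)$ must be invoked with the final scalars, the same ones used by the tie-breaking rule on line~\ref{line:tiebreaking}, and (ii) that the replacement of $\max_{i}$ by $\sum_{i}$ in passing from $\ms(\instance,\schedule)$ to the sum over machines is precisely where the factor $n$ is incurred --- the step that degrades \mech\ to the $n$-approximation of the plain greedy mechanism when the scalars provide no help.
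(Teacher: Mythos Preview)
Your proof is correct and uses the same core ingredient as the paper---Lemma~\ref{lem:sumscalar} together with the per-job inequality $p(i_j,j)\leq r(\optm_j,j)\,p(\optm_j,j)$---but the bookkeeping is organized differently. The paper partitions the \emph{jobs} according to whether $r(\optm_j,j)=1$ or $r(\optm_j,j)>1$: for the first set it uses the averaging bound $\OPT\geq\frac{1}{n}\sum_j p(\optm_j,j)$ to get $n\cdot\OPT$, and for the second set it groups by machines with $\max_j r(i,j)>1$, bounds each machine's optimal load by $\OPT$, and applies Lemma~\ref{lem:sumscalar} to get $\frac{n}{\tradeoff}\cdot\OPT$. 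You instead skip the job partition entirely, apply the bound $\sum_{j:\optm_j=i}p(i,j)\leq\OPT$ uniformly to every machine, and then partition the \emph{machines} by whether $\max_j r(i,j)>1$. Your route is a bit more streamlined---it avoids the separate averaging inequality and treats all jobs symmetrically---while the paper's job-level split makes it more explicit which jobs are responsible for which term. Either way the arithmetic lands on the same $(1+\tfrac{1}{\tradeoff})n$ factor.
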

\begin{proof}
First note that since $\OPT$ is the maximum finishing time of any machine in the optimal schedule, and since the maximum is greater than the average we have:
\begin{align}\label{eq:optlowerbound}
    \OPT \geq \frac{\sum_{j \in M}p(\optm_j, j)}{n}.
\end{align}
We now partition the set of jobs $M$ into two subsets, let $S^*$ be the set of jobs $j$ such that $r(\optm_j, j)=1$ and $M \setminus S^*$ be the rest of the jobs $j$ with $r(\optm_j,j) >1$.

For any job $j \in S^*$, let $i_j$ be the machine the mechanism assigns $j$ on, i.e., $x(i_j,j) = 1$. we have
\begin{align}\label{eq:m1processingtime}
    p(i_j, j ) \leq r(i,j) p(i_j, j) \leq r(\optm_j, j) p(\optm_j, j) = p(\optm_j, j),
\end{align}
where the first inequality is by Observation~\ref{obs:rbiggerthan1}, the second inequality is by the fact that $x(i_j,j)=1$ and the last equality is by definition of $S^*$.
Combining Inequalities \eqref{eq:optlowerbound} and \eqref{eq:m1processingtime} we get
\begin{align}\label{eq:m1sum}
    \sum_{j \in S^*} p(i_j, j)  \leq   \sum_{j \in S^*}p(\optm_j, j)  \leq  \sum_{j \in M}p(\optm_j, j) \leq n \cdot \OPT.
\end{align}

Now consider any job $j \in M \setminus S^*$, first note that since $r(\optm_j, j)>1$, $\optm_j \in \{i  : \max_j r(i,j) > 1\}$. Similarly as the analysis of $S^*$ we have 
\begin{align}\label{eq:m2processingtime}
    p(i_j, j ) \leq r(i,j) p(i_j, j) \leq r(\optm_j, j) p(\optm_j, j)
\end{align}
 Now consider the optimal makespan $\OPT$ again, by the definition of optimal schedule, we have:
\begin{align*}
    \OPT = \max_i \sum_{j: \optm_j = i} p(\optm_j,j) \geq \max_{i  : \max_j r(i,j) > 1} \sum_{j: \optm_j = i} p(\optm_j,j). \numberthis \label{eq:optlowerboundm2}
\end{align*}
Combining Inequalities \eqref{eq:m2processingtime} and \eqref{eq:optlowerboundm2}, we get:
\begin{align*}
    \sum_{j \in M \setminus S^*} p(i_j,j) & \leq \sum_{j \in M \setminus S^*} r(\optm_j, j)p(\optm_j, j) \tag{by Inequality \eqref{eq:m2processingtime}}\\
    & \leq \sum_{i  : \max_j r(i,j) > 1} \sum_{j: \optm_j = i} r(\optm_j, j)p(\optm_j, j) \\
    & \leq \sum_{i  : \max_j r(i,j) > 1} (\max_j r(i,j)) \cdot \sum_{j: \optm_j = i} p(\optm_j, j) \\
    & \leq \sum_{i  : \max_j r(i,j) > 1}(\max_j r(i,j))\OPT \tag{by Inequality \eqref{eq:optlowerboundm2}}\\
    & \leq \frac{n}{\tradeoff} \cdot \OPT, \numberthis \label{eq:m2sum} %\tag{by Lemma~\ref{lem:sumscalar}}
\end{align*}
where the last inequality is due to Lemma~\ref{lem:sumscalar}.
Combining Inequalities \eqref{eq:m1sum} and \eqref{eq:m2sum} we get:
$$\max_i \sum_{j \in M_i} p(i, j) \leq \sum_{j \in M} p(i_j, j) \leq \sum_{j \in S^*} p(i_j, j) + \sum_{j \in M \setminus S^*} p(i_j, j) \leq \left(1+\frac{1}{\tradeoff}\right)n \cdot \OPT.  \eqno \qedhere$$
\end{proof}

\paragraph{Strategyproofness.}
\begin{lemma}
\label{lem:astrategyproof}
\mech \ is a strategyproof mechanism.
\end{lemma}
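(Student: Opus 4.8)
The plan is to invoke the characterization of Lemma~\ref{lem:monotonicity} and reduce the claim to showing that the allocation rule of \mech\ is monotone; the argument is then essentially the same as for \simplemech\ in Lemma~\ref{lem:wstrategyproof}. The key structural point is that, although the construction of the scalars $r(i,j)$ in Mechanism~\ref{mech:scaledgreedy} (together with the auxiliary objects $J_i$, $I$, $T$ built in the while-loop) is far more elaborate than in \simplemech, every quantity it produces is a function of the predicted instance $\hat{\instance}$ \emph{only} and never of the reported instance $\instance$. Consequently, once $\hat{\instance}$ is fixed, the final allocation rule reads: assign job $j$ to $i_j = \argmin_i r(i,j)\,p(i,j)$, where the $r(i,j) \geq 1$ are fixed positive scalars (Observation~\ref{obs:rbiggerthan1}) and ties are broken according to the fixed priority order of line~\ref{line:tiebreaking} --- favor $\poptm_j$, then the (unique, since the $J_i$ are disjoint) machine $i'$ with $j \in J_{i'}$, then least index --- which is a total order on machines determined by $\hat{\instance}$ alone.

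Fix a machine $i$ and a predicted instance $\hat{\instance}$, and take two instances $\instance, \instance'$ differing only on row $i$, with resulting allocations $\schedule, \schedule'$. It suffices to prove $(x(i,j)-x'(i,j))(p(i,j)-p'(i,j)) \leq 0$ for each job $j$, since summing over $j \in M$ gives the monotonicity inequality. Fix $j$ and assume without loss of generality $p(i,j) \leq p'(i,j)$; then it is enough to show $x(i,j) \geq x'(i,j)$, i.e., that raising $i$'s reported time for $j$ cannot make $j$ become assigned to $i$. Let $w$ be the machine that wins job $j$ among all machines $i' \neq i$: this is well-defined because for $i' \neq i$ we have $r(i',j)p(i',j) = r(i',j)p'(i',j)$ and the tie-break order is fixed, so $w$ is the same under $\instance$ and $\instance'$. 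If $x(i,j)=0$ then $i$ does not beat $w$, so $r(i,j)p(i,j) \geq r(w,j)p(w,j)$, with $w$ preceding $i$ in the priority order whenever this holds with equality. Since $r(i,j)p'(i,j) \geq r(i,j)p(i,j) \geq r(w,j)p(w,j) = r(w,j)p'(w,j)$, and equality here forces equality throughout (hence $w$ still precedes $i$), $i$ again does not beat $w$ under $\instance'$, so $x'(i,j)=0$. Therefore $x(i,j) \geq x'(i,j)$, monotonicity holds, and by Lemma~\ref{lem:monotonicity} the mechanism is strategyproof.

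Because the argument is short, there is no single hard step; the only point that genuinely needs to be checked is that nothing inside the while-loop --- in particular the sets $J_i$, the test $i \in I$ via line~\ref{line:Idef}, the set $T$ via line~\ref{line:Tcondition}, and the selection of $(i^*,j^*)$ --- ever consults $\instance$, so that the scalars and the tie-break order are functions of $\hat{\instance}$ alone. Once this is confirmed, the per-job monotonicity comparison is identical to the one used for \simplemech, and no further case analysis (e.g., on whether $i$ equals $\poptm_j$ or owns $j$ in some $J_{i'}$) is needed, since those only affect $i$'s fixed position in the priority order.
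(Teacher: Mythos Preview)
Your proof is correct and follows essentially the same approach as the paper: first observe that the scalars $r(i,j)$, the sets $J_i$, and the tie-breaking priority depend only on $\hat{\instance}$ (the paper phrases this as ``lines 4 to 14 don't use $\instance$''), and then reduce to the per-job monotonicity argument of Lemma~\ref{lem:wstrategyproof}. Your treatment of the tie-breaking case via the auxiliary winner $w$ is slightly more explicit than the paper's, but the argument is the same.
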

\begin{proof}
Consider a machine $i \in N$ and a predicted assignment $\hat{\schedule}$. Consider two instances $\instance$ and $\instance'$ that differ only on machine $i$ and the associated allocations $\schedule$ and $\schedule'$ returned by \mech \ when the predicted assignment is $\hat{\schedule}$. Given a fixed predicted assignment $\hat{\schedule}$, the mechanism is deterministic, so $x(i,j) \in \{0,1\}$ for all $i \in N$ and $j \in M$.  Note that line 4 to 14 don't use $\instance$, thus the scalars are, as for \simplemech, independent of  $\instance$. The remaining of the proof follows identically as the analysis of \simplemech \ being strategyproof (Lemma~\ref{lem:wstrategyproof})
\end{proof}

\paragraph{The main result for \mech.}
Combining Lemmas~\ref{lem:consistency}, \ref{lem:robustness}, and \ref{lem:astrategyproof}, we obtain the main result for \mech, which is that it is a strategyproof mechanism that, with input parameter $\tradeoff$ being any constant, achieves constant consistency and linear robustness.

\begin{theorem}
    For any $\tradeoff \in \left(0, \frac{n}{2}-1\right)$, \mech \ is a strategyproof, $(2 + \tradeoff) \alpha$-consistent, and $\left(1+\frac{1}{\tradeoff}\right)n$-robust algorithm.
\end{theorem}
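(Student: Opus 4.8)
\emph{Proof strategy.} The plan is to obtain the theorem as an immediate consequence of the three properties of \mech\ that have already been established separately in this section: its consistency (Lemma~\ref{lem:consistency}), its robustness (Lemma~\ref{lem:robustness}), and its strategyproofness (Lemma~\ref{lem:astrategyproof}). Concretely, I would first observe that any $\tradeoff \in \left(0, \frac{n}{2}-1\right)$ in particular satisfies $\tradeoff > 0$, which is the only hypothesis required by each of these three lemmas; hence all three apply simultaneously to this choice of parameter. Strategyproofness then follows verbatim from Lemma~\ref{lem:astrategyproof} (whose proof reduces, via the monotonicity characterization of Lemma~\ref{lem:monotonicity}, to the fact that the scalars $r(i,j)$ are computed independently of the reported instance $\instance$, exactly as in \simplemech). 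The $(2+\tradeoff)\alpha$-consistency is exactly Lemma~\ref{lem:consistency}, and the $\left(1+\frac{1}{\tradeoff}\right)n$-robustness is exactly Lemma~\ref{lem:robustness}, so the three bullet points of the statement are each discharged by a single citation.

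A secondary point worth spelling out when assembling the write-up is why the interval $\left(0, \frac{n}{2}-1\right)$ is the natural domain for $\tradeoff$: the lower endpoint is forced because the robustness analysis (Lemma~\ref{lem:robustness}, through Lemma~\ref{lem:sumscalar}) divides by $\tradeoff$, while keeping $\tradeoff$ strictly below $\frac{n}{2}-1$ ensures the two guarantees are simultaneously non-trivial — the robustness $\left(1+\frac{1}{\tradeoff}\right)n$ stays $\Theta(n)$ and, instantiating \algo\ with the polynomial-time $2$-approximation of \citet{LST90} so that $\alpha = 2$, the consistency $(4 + 2\tradeoff)$ remains strictly below the trivial $n$-bound attained by the pure greedy mechanism. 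I would close with the short remark that this instantiation yields the headline statement: a deterministic, polynomial-time, strategyproof, $(4+2\tradeoff)$-consistent and $\left(1+\frac{1}{\tradeoff}\right)n$-robust mechanism, and in particular a $6$-consistent, $2n$-robust mechanism for $\tradeoff = 1$.

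Since all of the analytic content lives in the lemmas that precede the theorem, there is no genuine obstacle in the theorem itself. If forced to identify the technically delicate ingredient underpinning it, it is the robustness argument — specifically the bound $\sum_{i\,:\,\max_j r(i,j) > 1}\max_j r(i,j) \leq \frac{n}{\tradeoff}$ of Lemma~\ref{lem:sumscalar}, which is what upgrades the per-machine capping structure (the sets $J_i$, $I$, $T$) from the quadratic robustness of \simplemech\ to the linear robustness claimed here. The only routine sanity check I would perform while combining the pieces is that \mech\ is well-defined and its \texttt{while} loop terminates for every $\tradeoff$ in the stated range, which holds because each iteration permanently moves a job into $\cup_i J_i$ and such jobs are thereafter excluded from $T$ by Line~\ref{line:Tcondition}.
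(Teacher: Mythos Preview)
Your proposal is correct and matches the paper's own proof, which simply states that the theorem follows by combining Lemmas~\ref{lem:consistency}, \ref{lem:robustness}, and \ref{lem:astrategyproof}. The additional remarks you include about the role of the interval $\left(0,\frac{n}{2}-1\right)$, the instantiation with $\alpha=2$, and termination of the \texttt{while} loop are all sound elaborations beyond what the paper writes, but the core argument is identical.
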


To obtain a polynomial-time algorithm, we can have $\alpha = 2$ and a consistency of $4 + 2 \tradeoff$ with \algo \ being the $2$-approximation algorithm for makespan minimization on unrelated machines by \citet{LST90}. By ignoring running time consideration, we can obtain $\alpha = 1$ and a consistency of $2 + \tradeoff$ with \algo \ finding an optimal schedule.

\section{The Error Tolerant Scaled Greedy Mechanism}\label{sec:error}
In this section, we give a mechanism that builds on the mechanism from the previous section and achieves a constant approximation not only when the predictions are accurate, but also when the predictions are approximately accurate. Recall from Section~\ref{sec:prelims} that, given a prediction $\hat{\instance}$ of the actual instance $\instance$,  the prediction error $\eta$ is the largest ratio between the predicted processing time and actual processing time for any $i, j$ pair, i.e.,
$\eta = \max_{i \in M, j \in N} \max \left\{\frac{\hat{p}_{ij}}{p_{ij}}, \frac{p_{ij}}{\hat{p}_{ij}}\right\}.$ This mechanism takes as input an error tolerance parameter $\errbound > 0$ and achieves an approximation of  $(2+\tradeoff)\alpha\eta^2$ if $\eta \leq \errbound$ and $(1+\frac{1}{\tradeoff})\errbound^2n$ otherwise, where, similarly as in the previous section, $\tradeoff$ is the prediction confidence parameter and $\alpha$ is the approximation of the scheduling algorithm used in the mechanism.

\paragraph{The mechanism.} \mecherr, formally described in Mechanism~\ref{mech:errorversionscaledgreedy}, is similar to \mech, but takes as input an additional parameter $\errbound > 0$ that is the error tolerance of the mechanism. Recall from the previous section that if $j \in J_i$ for some machine $i$, then, when the predictions are correct, \mech \ assigns $j$ to machine $i$, otherwise it assigns $j$ to its predicted assignment $\poptm_j.$ \mecherr \ aims to achieve this assignment not only when the predictions are correct, but also when they are approximately correct, with one exception that is later discussed. In order to achieve this assignment when the predictions are approximately correct, the mechanism sets the scalars $r(i,j)$ of pairs $i,j$ such that job $j$ that should be assigned to $i$ to $r(i,j) = \frac{1}{\errbound^2}$ in lines~\ref{eline:update2} and \ref{eline:enforce2}. The exception previously mentioned is that it is not only the scalar of $j^{\star} \in J_{i^{\star}}$ that is updated to  $\frac{1}{\errbound^2}$ in line~\ref{eline:update2}, but the scalars $r(i, j^{\star})$ for all machines $i$ such that $\hat{p}(i^*, j^*) \leq \hat{p}(i,j^*)$. The remainder of the mechanism is identical to \mech.
\begin{algorithm}[h]
\textbf{Input:} instance $\instance \in \mathbb{R}^{n \times m}$, predicted instance $\hat{\instance} \in \mathbb{R}^{n \times m}$, scheduling algorithm \algo, prediction confidence parameter $\gamma \in (0, \frac{n}{2} - 1)$, error tolerance parameter $\errbound > 0$

$\hat{\schedule} \leftarrow \algo(\hat{\instance})$\\
$\poptm_j \leftarrow  $ the machine $i$ that job $j$ is assigned  to according to $\hat{\schedule}$, i.e., $\hat{x}(\poptm_j,j) = 1$, for each $j \in M$ \\
$r(i,j) \leftarrow \max\left( \frac{\hat{p}(\poptm_j,j)}{\hat{p}(i,j)},1\right)$ for each $(i,j) \in N \times M$ \label{eline:update1}\\
$J_{i} \leftarrow \emptyset$ for all $i \in N$ \\
$I \leftarrow N$ \\
$T \leftarrow \{(i,j): j  \in M,  i = \argmin_{i' \in I : \hat{p}(i',j) < \hat{p}(\poptm_j, j)}\hat{p}(i',j)\}$\\
\While{$T$ is not empty}{
    $(i^*, j^*) \leftarrow \argmax_{(i,j) \in T} \frac{\hat{p}(\poptm_j,j)}{\hat{p}(i,j)}$ \\
        $J_{i^*} \leftarrow J_{i^*} \cup \{j^*\}$ \\
        \For{all $i$ s.t. $\hat{p}(i^*, j^*) \leq \hat{p}(i,j^*)$}{
             update $r(i, j^*) \leftarrow \frac{1}{\errbound^2}$ \label{eline:update2}
        }
        $I \leftarrow \{i \in N : \sum_{j \in J_{i}}\hat{p}(i,j) < \tradeoff \ms(\hat{\instance},\hat{\schedule})\}$\label{eline:Idef} \\
    $T \leftarrow \{(i,j): j  \in M \setminus (\cup_i J_i),  i = \argmin_{i' \in I : \hat{p}(i',j) < \hat{p}(\poptm_j, j)}\hat{p}(i',j)\}$ \label{eline:Tcondition}\\
    }
\For{all jobs $j$ s.t $j \notin \cup_{i \in N} J_i$\label{eline:enforce1}}{
    update $r(\poptm_j,j)  \leftarrow  \frac{1}{\errbound^2}$\label{eline:enforce2}
}
$i_j \leftarrow \argmin_i  r(i,j) \cdot  p(i,j) $, break ties in favor of  $\poptm_j$ first and $i'$ if $j \in J_{i'}$ second, for each $j \in M$ \label{eline:tie} \\
\textbf{if} $i = i_j$ \textbf{then} $x(i, j) = 1$ \textbf{else} $x(i,j) = 0$, for each $(i,j) \in N \times M$\ \\
\Return{$\mathbf{x}$}
\caption{\mecherr}
\label{mech:errorversionscaledgreedy}
\end{algorithm}

\paragraph{The analysis.} Note that mechanisms \mech\ and \mecherr\ are identical except lines~\ref{eline:update2}, \ref{eline:enforce1} and \ref{eline:enforce2}. It is easy to verify that Lemma~\ref{lem:decreasingratio}, Lemma~\ref{lem:setasideopt}, and Lemma~\ref{lem:sumscalar} for \mech \ also hold for \mecherr. The Observation~\ref{obs:rbiggerthan1} also holds true for \mecherr\ for any $r(i,j)$ given that $i \neq \poptm_j$ and $j \notin J_i$. We first show that the characterization of the jobs in $J_i$ as  $J_i = M_i \setminus \hat{M}_i$ for \mech \ when the predictions are accurate now also holds for \mecherr \ when the error is bounded by the error tolerance, i.e., $\eta(\instance, \hat{\instance}) \leq \errbound$.

\begin{lemma}
\label{lem:errorJ}
Given any prediction $\hat{\instance}$ and actual instance $\instance$, if $\eta(\instance, \hat{\instance}) \leq \errbound$, then, for every machine $i \in N$, we have $J_i = M_i \setminus \hat{M}_i$.
\end{lemma}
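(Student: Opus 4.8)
The plan is to re-run the two-part argument of Lemma~\ref{lem:jobassignment} essentially verbatim, but to replace every appeal to the exact identity $\instance=\hat{\instance}$ by the two-sided error bound $\frac{1}{\errbound}\hat{p}(a,b)\le p(a,b)\le\errbound\,\hat{p}(a,b)$, which is exactly what $\eta(\instance,\hat{\instance})\le\errbound$ gives for every machine $a$ and job $b$. As already observed, Lemma~\ref{lem:decreasingratio}, Lemma~\ref{lem:setasideopt}, Lemma~\ref{lem:sumscalar} and the relevant part of Observation~\ref{obs:rbiggerthan1} carry over to \mecherr\ unchanged, and since lines~\ref{eline:update1}--\ref{eline:Tcondition} of \mecherr\ are identical to the corresponding lines of \mech, the structural bookkeeping of Lemma~\ref{lem:jobassignment} (``$j$ is $j^\star$ in exactly one iteration'', ``once a job enters some $J_{i'}$ it leaves $T$'') is also unchanged. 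I would then prove the two inclusions $J_i\subseteq M_i\setminus\hat M_i$ and $M_i\setminus\hat M_i\subseteq J_i$ separately, for every machine $i$.

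For $J_i\subseteq M_i\setminus\hat M_i$: fix $j\in J_i$ and let $t$ be the iteration at which $j$ is added to $J_i$. Then $(i,j)\in T$ at $t$, so $\hat{p}(i,j)<\hat{p}(\poptm_j,j)$, hence $i\ne\poptm_j$ and $j\notin\hat M_i$; moreover line~\ref{eline:update2} sets $r(i,j)=\frac{1}{\errbound^2}$. It remains to show $j\in M_i$, i.e.\ that $i$ minimizes $r(i',j)\,p(i',j)$ over $i'$ (winning ties where needed). For a machine $i'$ with $\hat{p}(i',j)<\hat{p}(i,j)$, the Lemma~\ref{lem:jobassignment} argument shows $r(i',j)$ is never reset, so $r(i',j)=\frac{\hat{p}(\poptm_j,j)}{\hat{p}(i',j)}$, and then
\begin{align*}
r(i,j)\,p(i,j)=\tfrac{1}{\errbound^2}p(i,j)\le\tfrac{1}{\errbound}\hat{p}(i,j)<\tfrac{1}{\errbound}\hat{p}(\poptm_j,j)\le\tfrac{\hat{p}(\poptm_j,j)}{\hat{p}(i',j)}\,p(i',j)=r(i',j)\,p(i',j),
\end{align*}
a strict win for $i$. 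For the remaining machines $i'$ — those with $\hat{p}(i',j)\ge\hat{p}(i,j)$, a set that includes $\poptm_j$ — line~\ref{eline:update2} also sets $r(i',j)=\frac{1}{\errbound^2}$, so the comparison reduces to $p(i,j)$ versus $p(i',j)$; here I would use that $i$ was selected as the $\argmin$ over $I$ in the definition of $T$, together with $\eta\le\errbound$, to show $j$ does not drift off $i$, and the tie-breaking rule of line~\ref{eline:tie} (which ranks $\poptm_j$ first and then $i$ since $j\in J_i$) to handle equalities. Concluding $x(i,j)=1$ gives $j\in M_i$, hence $j\in M_i\setminus\hat M_i$.

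For the reverse inclusion $M_i\setminus\hat M_i\subseteq J_i$: fix $j\in M_i\setminus\hat M_i$, so $i\ne\poptm_j$. Since $\poptm_j$ has tie-break priority in line~\ref{eline:tie} and $j$ was nevertheless assigned to $i$, we must have $r(i,j)\,p(i,j)<r(\poptm_j,j)\,p(\poptm_j,j)=\frac{1}{\errbound^2}p(\poptm_j,j)$, using that lines~\ref{eline:update2}/\ref{eline:enforce2} force $r(\poptm_j,j)=\frac{1}{\errbound^2}$ for every job. If $r(i,j)$ still had its initialized value — either $1$ (when $\hat{p}(i,j)\ge\hat{p}(\poptm_j,j)$) or $\frac{\hat{p}(\poptm_j,j)}{\hat{p}(i,j)}$ (otherwise) — then in both cases the error bound yields $r(i,j)\,p(i,j)\ge\frac{1}{\errbound}\hat{p}(\poptm_j,j)\ge\frac{1}{\errbound^2}p(\poptm_j,j)$, contradicting the previous inequality; hence $r(i,j)$ was reset to $\frac{1}{\errbound^2}$ by line~\ref{eline:update2}, which occurs only at the unique iteration with $j^\star=j$, when $j$ is added to some $J_{i^\star}$. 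If $i^\star\ne i$ then $j\in J_{i^\star}$, and the first inclusion applied to $i^\star$ would give $j\in M_{i^\star}$, contradicting $j\in M_i$; so $i^\star=i$ and $j\in J_i$.

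The step I expect to be the main obstacle is the ``remaining machines'' case of the first inclusion (and, implicitly, the soundness of invoking it in the second). The difficulty is that \mecherr\ lowers $r(i',j)$ to $\frac{1}{\errbound^2}$ for \emph{every} machine $i'$ with $\hat{p}(i',j)\ge\hat{p}(i^\star,j^\star)$ — not just for $i^\star$ itself — so on that set the scalars provide no separation, and one must bound the true load $p(i,j)$ against $p(i',j)$ using only the ordering of predicted loads, the $\argmin$-over-$I$ selection of $i^\star$, the error bound $\eta\le\errbound$, and the tie-break against $\poptm_j$. All other steps are the Lemma~\ref{lem:jobassignment} argument with a factor of $\errbound$ threaded through each inequality.
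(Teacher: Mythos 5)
Your plan is the same one the paper follows: rerun Lemma~\ref{lem:jobassignment} with each use of $\instance=\hat{\instance}$ replaced by $\hat{p}(a,b)/\errbound\le p(a,b)\le \errbound\,\hat{p}(a,b)$, and the cases you do work out (machines with strictly smaller predicted time, the comparison against $\poptm_j$, and the reverse inclusion modulo the forward one) track the paper's calculations. However, the step you flag as the main obstacle is a genuine gap, and it cannot be closed in the way you hope. For a machine $i'\notin\{i,\poptm_j\}$ with $\hat{p}(i',j)\ge\hat{p}(i,j)$, line~\ref{eline:update2} sets $r(i',j)=1/\errbound^2=r(i,j)$, so the comparison collapses to $p(i,j)$ versus $p(i',j)$, and the only leverage available is $p(i,j)\le\eta\,\hat{p}(i,j)\le\eta\,\hat{p}(i',j)\le\eta^2\,p(i',j)$. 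A factor of $\eta^2>1$ can reverse the true ordering relative to the predicted one (e.g.\ $\hat{p}(i,j)=1$, $\hat{p}(i',j)=2$ but $p(i,j)=2$, $p(i',j)=1$, consistent with $\eta=2\le\errbound$), in which case $j\in J_i$ yet the job is assigned to $i'$, so both inclusions $J_i\subseteq M_i\setminus\hat{M}_i$ and $M_{i'}\setminus\hat{M}_{i'}\subseteq J_{i'}$ break. Neither the $\argmin$-over-$I$ selection of $i$ nor the tie-breaking of line~\ref{eline:tie} helps, since both only see predicted times.

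For what it is worth, the paper's own proof does not resolve this case either: it argues via the claim that the mechanism ``did not modify $r(i,j)$ for any other machine $i\notin\{i',\poptm_j\}$,'' i.e.\ it treats every scalar other than $r(i',j)$ and $r(\poptm_j,j)$ as still holding its initialized value, in which case $\hat{p}(i',j)\le\hat{p}(i,j)\,r(i,j)$ and the displayed chain of inequalities closes with slack $\eta/\errbound^2\le1/\errbound$. That claim contradicts line~\ref{eline:update2} as written, which resets $r(i,j^\star)$ for \emph{every} $i$ with $\hat{p}(i^\star,j^\star)\le\hat{p}(i,j^\star)$. So you have not missed an idea that the paper supplies; you have located the point where the written mechanism and the written proof diverge. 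Under the reading in which line~\ref{eline:update2} updates $r(i^\star,j^\star)$ alone (with $r(\poptm_j,j)$ handled by line~\ref{eline:enforce2} for unassigned jobs), the Lemma~\ref{lem:jobassignment} argument does carry over with the error factors threaded through exactly as you propose, and your proof would be complete.
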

\begin{proof}
 By Inequality \eqref{eq:si1forsure}, \eqref{eq:si1forsure1} and \eqref{eq:si1forsure2} from the proof of Lemma~\ref{lem:jobassignment} we know that if $j \in J_i$ for some machine $i'$, $\hat{p}(i',j) \leq \hat{p}(i,j) \cdot r(i,j)$ for any $i,j$ pair. By line~\ref{eline:update2} of the mechanism and the fact that the mechanism did not modify $r(i,j)$ for any other machine $i \notin \{i', \poptm_j\}$, we have for any $i\notin \{i', \poptm_j\}$:
\begin{align*}
    p(i',j) \cdot r(i',j) &\leq \eta \hat{p}(i',j) \cdot r(i',j) = \eta \hat{p}(i',j) \cdot 1/\errbound^2 = \hat{p}(i',j) \frac{\eta}{\errbound^2} \leq \hat{p}(i',j)\cdot 1/\errbound\\
    &\leq  \hat{p}(i,j) \cdot r(i,j) \cdot 1/\errbound \leq p(i,j) \cdot r(i,j) \cdot \frac{\eta}{\errbound} \leq p(i,j) \cdot r(i,j),
\end{align*}
Where the first inequality is by the definition of $\eta$, the second inequality is due to $\eta \leq \errbound$, the third inequality is since $r(i,j) \geq 1$ by Observation~\ref{obs:rbiggerthan1}, and the forth inequality is again by the definition of $\eta$. We also have:
\[p(i'j) \cdot r(i',j) = \hat{p}(i,j) \cdot \frac{1}{\errbound^2} < \hat{p}(\poptm_j,j) \cdot \frac{1}{\errbound^2} = p(\poptm_j,j) \cdot r(\poptm_j, j)\]

Since  $p(i',j) \cdot r(i',j) \leq p(i,j) \cdot r(i,j)$ and $p(i'j) \cdot r(i',j)<p(\poptm_j,j) \cdot r(\poptm_j, j) $, by line~\ref{eline:tie}, we have that $x(i',j) = 1$ in \mecherr. Following a similar argument we get for any job $j$ such that $j \notin \cup_{i \in N}J_i$, we have $x(\poptm_j, j) = 1$ in \mecherr.
\end{proof}

Given that $J_i = M_i \setminus \hat{M}_i$, we have Lemma~\ref{lem:upperboundvalforsi} holds true. The approximation obtained by \mecherr \ when the error is within the error tolerance is $(2+\tradeoff)\alpha\eta^2$. The analysis of this approximation is similar to the analysis of the consistency of \mech.

\begin{lemma}
\label{lem:errcons}
Given an error tolerance bound $\errbound$, and an actual error $\eta$, the approximation ratio of mechanism \mecherr\ is $(2+\tradeoff)\alpha\eta^2$ if $\eta \leq \errbound$.
\end{lemma}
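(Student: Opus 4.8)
The plan is to mirror the consistency analysis of \mech\ (Lemma~\ref{lem:consistency}), replacing the exact equalities between predicted and true processing times with the two-sided bound $\frac{1}{\eta}\le\frac{\hat{p}(i,j)}{p(i,j)}\le\eta$ supplied by the definition of the error $\eta$, and keeping track of how the extra factors of $\eta$ accumulate. Fix an arbitrary machine $i$; the goal is to show $\sum_{j\in M_i}p(i,j)\le(2+\tradeoff)\alpha\eta^2\,\OPT(\instance)$. As in the exact case, I would split $M_i=(M_i\cap\hat{M}_i)\cup(M_i\setminus\hat{M}_i)$ and bound each part separately.

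For the first part, the jobs assigned to $i$ both by the mechanism and by the predicted assignment: since $p(i,j)\le\eta\,\hat{p}(i,j)$ for every pair, $\sum_{j\in M_i\cap\hat{M}_i}p(i,j)\le\eta\sum_{j\in\hat{M}_i}\hat{p}(i,j)\le\eta\,\ms(\hat{\instance},\hat{\schedule})\le\eta\alpha\,\OPT(\hat{\instance})$, which is Lemma~\ref{lem:boundcap2} with one extra factor of $\eta$. The remaining ingredient is to relate $\OPT(\hat{\instance})$ to $\OPT(\instance)$: since $\hat{p}(i,j)\le\eta\,p(i,j)$ for every pair, any fixed schedule has makespan on $\hat{\instance}$ at most $\eta$ times its makespan on $\instance$, so in particular $\OPT(\hat{\instance})\le\eta\,\OPT(\instance)$. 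This yields $\sum_{j\in M_i\cap\hat{M}_i}p(i,j)\le\alpha\eta^2\,\OPT(\instance)$.

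For the second part, I would first invoke Lemma~\ref{lem:errorJ}, which applies precisely because $\eta\le\errbound$, to conclude $M_i\setminus\hat{M}_i=J_i$. This lets me apply Lemma~\ref{lem:upperboundvalforsi} (whose only hypothesis is $J_i=M_i\setminus\hat{M}_i$) to get $\sum_{j\in M_i\setminus\hat{M}_i}\hat{p}(i,j)<(1+\tradeoff)\alpha\,\OPT(\hat{\instance})$. Then, using $p(i,j)\le\eta\,\hat{p}(i,j)$ once more together with $\OPT(\hat{\instance})\le\eta\,\OPT(\instance)$, I obtain $\sum_{j\in M_i\setminus\hat{M}_i}p(i,j)<(1+\tradeoff)\alpha\eta^2\,\OPT(\instance)$. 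Adding the two bounds gives $\sum_{j\in M_i}p(i,j)<(2+\tradeoff)\alpha\eta^2\,\OPT(\instance)$, and maximizing over $i$ completes the proof.

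Since all the heavy lifting is done by the already-established lemmas, I do not expect a genuine obstacle; the only point needing care is applying the factors of $\eta$ in the right direction — one factor to pass from true loads to predicted loads when summing over a machine, and a second from $\OPT(\hat{\instance})\le\eta\,\OPT(\instance)$ — so that the final bound carries $\eta^2$ rather than $\eta$. I would also verify that Lemma~\ref{lem:errorJ} is the sole place where the hypothesis $\eta\le\errbound$ enters, which is exactly what confines this clean bound to the low-error regime and forces the separate $(1+\frac{1}{\tradeoff})\errbound^2 n$ bound once $\eta>\errbound$.
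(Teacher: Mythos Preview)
Your proposal is correct and follows essentially the same approach as the paper: both invoke Lemma~\ref{lem:errorJ} to get $J_i=M_i\setminus\hat{M}_i$ under $\eta\le\errbound$, apply Lemma~\ref{lem:upperboundvalforsi} to bound the predicted load of $M_i\setminus\hat{M}_i$, bound $M_i\cap\hat{M}_i$ via $\hat{M}_i$ and $\ms(\hat{\instance},\hat{\schedule})$, and then pick up two factors of $\eta$ (one converting true to predicted loads, one from $\OPT(\hat{\instance})\le\eta\,\OPT(\instance)$). The only cosmetic difference is that the paper first bounds $\sum_{j\in M_i}\hat{p}(i,j)$ in one shot and applies the $\eta$-conversions at the end, whereas you apply them term-by-term; the arithmetic is identical.
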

\begin{proof}
By Lemma~\ref{lem:errorJ}, when $\eta(\instance, \hat{\instance}) \leq \errbound$, the assignment $\schedule$  returned by the mechanism over input $(\instance, \hat{\instance}, \algo, \tradeoff, \errbound)$ is identical to the assignment returned by $\mech$ over input $(\instance, \instance, \algo, \tradeoff)$. By Lemma~\ref{lem:upperboundvalforsi} we have for any machine $i$
\begin{align*}
    \sum_{j \in M_i} \hat{p}(i,j) \cdot x(i,j) =& 
    \sum_{j \in M_i\cap \hat{M}_i }\hat{p}(i,j) + \sum_{j \in M_i \setminus \hat{M}_i} \hat{p}(i,j)\\
    =& 
    \sum_{j \in M_i\cap \hat{M}_i }\hat{p}(i,j) + \sum_{j \in J_i} \hat{p}(i,j)\tag{by Lemma~\ref{lem:errorJ}}\\
    \leq & \ms(\hat{\instance}, \hat{\schedule}) + (1+\tradeoff)\ms(\hat{\instance}, \hat{\schedule})\tag{by Lemma~\ref{lem:upperboundvalforsi}}\\
    \leq &
    (2+\tradeoff)\ms(\hat{\instance}, \hat{\schedule})\\
    \leq& (2+\tradeoff)\alpha\ms(\hat{\instance}, \hat{\schedule}^\star).\numberthis \label{eq:optofpredicted}
\end{align*}
where $\hat{\schedule}^\star$ is optimal schedule for the predicted instance and the last inequality is by the fact that \algo\ is a $\alpha$-approximation algorithm. Let $\schedule^\star$ be the optimal schedule for the the actual instance $\instance$, we have:
\[ \ms(\hat{\instance},\hat{\schedule}^\star) \leq \ms( \hat{\instance}, \schedule^{\star}).\] 
Now again given the definition of error $\eta$ we have:
\begin{align}\label{eq:actualerror}
    \ms(\hat{\instance},\hat{\schedule}^\star) \leq \ms(\hat{\instance}, \schedule^\star) = \max_{i}\sum_{j \in M} \hat{p}(i,j) \cdot x^\star(i,j) \leq \max_{i}\sum_{j \in M} \eta \cdot p(i,j) \cdot x^\star(i,j) %= \eta \max_{i}\sum_{j \in M} p(i,j) \cdot x^o(i,j) 
    = \eta \cdot \OPT.
\end{align}
Combining Inequalities \eqref{eq:optofpredicted} and \eqref{eq:actualerror} we get:
\[ \max_i\sum_{j \in M}p(i,j) \cdot x(i,j) \leq \max_i\sum_{j \in M} \eta \hat{p}(i,j) \cdot x(i,j) \leq (2+\tradeoff)\alpha\eta\ms(\hat{\instance},\hat{\schedule}^\star) \leq (2+\tradeoff)\alpha\eta^2 \OPT.  \qedhere \]
\end{proof}

The approximation obtained by \mecherr \ when the error is not within the error tolerance is $(1+\frac{1}{\tradeoff})\errbound^2n$. The analysis of this approximation is similar to the analysis of the robustness of \mech.

\begin{lemma}
\label{lem:errrobust}
Given an error tolerance bound $\errbound$, the approximation ratio of mechanism \mecherr\ is  $(1+\frac{1}{\tradeoff})\errbound^2n$.
\end{lemma}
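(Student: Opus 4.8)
The plan is to re-run the robustness analysis of \mech\ (Lemma~\ref{lem:robustness}) almost verbatim, tracking the one structural difference between the two mechanisms: in \mecherr\ the scalars that the mechanism pushes down are reset to $\frac{1}{\errbound^2}$ (lines~\ref{eline:update2} and~\ref{eline:enforce2}) instead of to $1$. In the meaningful regime $\errbound\ge 1$ (so $\frac1{\errbound^2}\le 1$), every scalar is therefore either an unmodified initialized value $\max\!\big(\tfrac{\hat p(\poptm_j,j)}{\hat p(i,j)},1\big)\ge 1$ or exactly $\frac1{\errbound^2}$, and in particular $r(i,j)\ge\frac1{\errbound^2}$ for all $i,j$. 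The extra factor $\errbound^2$ in the claimed bound is exactly what this weaker lower bound on the scalars costs us (and note that, as in Lemma~\ref{lem:robustness}, the approximation parameter $\alpha$ of \algo\ does not enter, since the whole argument is expressed directly in terms of the true $\OPT$).

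First I would record the per-job inequality, just as in the proof of Lemma~\ref{lem:robustness}: fix an optimal schedule $\schedule^\star$ for $\instance$, and for each job $j$ let $i_j$ be the machine \mecherr\ assigns $j$ to and $\optm_j$ the machine of $j$ in $\schedule^\star$. Line~\ref{eline:tie} gives $r(i_j,j)\,p(i_j,j)\le r(\optm_j,j)\,p(\optm_j,j)$, and together with $r(i_j,j)\ge\frac1{\errbound^2}$ this yields
\[
p(i_j,j)\ \le\ \errbound^2\,r(i_j,j)\,p(i_j,j)\ \le\ \errbound^2\,r(\optm_j,j)\,p(\optm_j,j).
\]
Next I would partition $M$ into $S^\star=\{j:r(\optm_j,j)\le 1\}$ and $M\setminus S^\star=\{j:r(\optm_j,j)>1\}$. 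For $j\in S^\star$ the display collapses to $p(i_j,j)\le\errbound^2 p(\optm_j,j)$, so summing and using $\OPT\ge\frac1n\sum_{j\in M}p(\optm_j,j)$ gives $\sum_{j\in S^\star}p(i_j,j)\le\errbound^2 n\,\OPT$. For $j\in M\setminus S^\star$, the fact that $r(\optm_j,j)>1$ forces $\optm_j$ into $\{i:\max_j r(i,j)>1\}$; grouping these jobs by their optimal machine, bounding each $r(\optm_j,j)$ by $\max_{j'}r(\optm_j,j')$, and using $\sum_{j:\optm_j=i}p(\optm_j,j)\le\OPT$ reduces that partial sum to $\errbound^2\,\OPT\cdot\sum_{i:\max_j r(i,j)>1}\max_j r(i,j)$, which Lemma~\ref{lem:sumscalar} (noted in the excerpt to carry over to \mecherr) bounds by $\errbound^2\frac{n}{\tradeoff}\OPT$. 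Adding the two pieces, $\max_i\sum_{j\in M_i}p(i,j)\le\sum_{j\in M}p(i_j,j)\le\errbound^2 n\,\OPT+\errbound^2\frac n\tradeoff\OPT=(1+\frac1\tradeoff)\errbound^2 n\,\OPT$.

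The step I expect to need the most care --- the main obstacle --- is confirming that $\{i:\max_j r(i,j)>1\}$ still means what it meant in Lemma~\ref{lem:sumscalar} for \mech, namely the machines whose $J_i$-budget $\tradeoff\,\ms(\hat{\instance},\hat{\schedule})$ has been exhausted (through Lemma~\ref{lem:setasideopt}), and that a scalar $r(\optm_j,j)>1$ is necessarily the unmodified initialized ratio $\tfrac{\hat p(\poptm_j,j)}{\hat p(\optm_j,j)}$. This is precisely where $\errbound\ge 1$ is used: it forces $\frac1{\errbound^2}\le 1$, so the pushed-down scalars (including the ones reset in lines~\ref{eline:enforce1}--\ref{eline:enforce2}) cannot masquerade as ``large'' scalars and pollute that set. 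Once this is verified, the remainder is a routine re-run of the \mech\ robustness computation with an $\errbound^2$ multiplier carried through every line, and I would not anticipate further difficulty.
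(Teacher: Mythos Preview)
Your proposal is correct and follows essentially the same approach as the paper's proof: the same partition of $M$ into $S^\star=\{j:r(\optm_j,j)\le 1\}$ and its complement, the same use of $r(i,j)\ge\frac{1}{\errbound^2}$ to pick up the $\errbound^2$ factor in the per-job inequality, and the same appeal to Lemma~\ref{lem:sumscalar} (carried over to \mecherr) for the $M\setminus S^\star$ contribution. Your explicit flagging of the $\errbound\ge 1$ assumption---needed so that the reset scalars $\frac{1}{\errbound^2}$ do not land in the ``$>1$'' bucket and so that Lemmas~\ref{lem:setasideopt} and~\ref{lem:sumscalar} remain applicable unchanged---is in fact more careful than the paper, which uses this implicitly.
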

\begin{proof}
 We again partition the set of jobs $M$ into two subset, let $S^*$ be the set of jobs $j$ such that $r(\optm_j, j)\leq 1$ and $M \setminus S^*$ be the rest of the jobs $j$ with $r(\optm_j,j )>1$. Since \mecherr\ only modify some scalar to $\frac{1}{\errbound^2}$, combining with Observation~\ref{obs:rbiggerthan1} we have: $r(i,j) \geq \frac{1}{\errbound^2}$ for any $i,j$. Next,
\begin{align}\label{eq:m1processingtime1}
    \frac{1}{\errbound^2} p(i_j, j ) \leq r(i,j) p(i_j, j) \leq r(\optm_j, j) p(\optm_j, j) \leq p(\optm_j, j),
\end{align}
where the second inequality is by the fact that $x(i_j, j) = 1$, and the last equality is by definition of $S^*$.  We get that
\begin{align}\label{eq:m1sum1}
    \sum_{j \in S^*} p(i_j, j)  \leq  \errbound^2 \sum_{j \in S^*}p(\optm_j, j)  \leq \errbound^2 \sum_{j \in M}p(\optm_j, j) \leq \errbound^2 n \cdot \OPT.
\end{align}
where the first inequality is by Equation~\eqref{eq:m1processingtime1} and the last inequality is by Inequality~\eqref{eq:optlowerbound} from Lemma~\ref{lem:robustness}.
Now consider any job $j \in M \setminus S^*$. Similarly as the analysis of $S^*$ we have 
\begin{equation*}
    \frac{1}{\errbound^2} p(i_j, j ) \leq r(i,j) p(i_j, j) \leq r(\optm_j, j) p(\optm_j, j).
\end{equation*}
Next, we have:
\begin{align}\label{eq:m2sum1}
    \sum_{j \in M \setminus S^*} p(i_j,j) & \leq \errbound^2 \sum_{j \in M \setminus S^*} r(\optm_j, j)p(\optm_j, j) \leq \errbound^2 \frac{n}{\tradeoff} \cdot \OPT,
\end{align}
where  the second inequality is by Inequality \eqref{eq:m2sum}. Combining \eqref{eq:m1sum1} and \eqref{eq:m2sum1} we get:
$$
\max_i \sum_{j \in M_i} p(i, j) \leq \sum_{j \in M} p(i_j, j) \leq \sum_{j \in S^*} p(i_j, j) + \sum_{j \in M \setminus S^*} p(i_j, j) \leq \left(1+\frac{1}{\tradeoff}\right)\errbound^2n \cdot \OPT. \eqno \qedhere$$
\end{proof}

The proof for the strategyproofness \mecherr \ is identical as the proof for the strategyproofness of \mech (Lemma~\ref{lem:astrategyproof}). Together with Lemma~\ref{lem:errcons} and Lemma~\ref{lem:errrobust}, we get the main result for \mecherr.

\begin{theorem}
For any $\tradeoff \in \left(0, \frac{n}{2}-1\right)$ and  error tolerance bound $\errbound > 0$, \mecherr\ is a strategyproof mechanism  that achieves an approximation of $(2+\tradeoff)\alpha\eta^2$ if $\eta \leq \errbound$ and $\left(1+\frac{1}{\tradeoff}\right)\errbound^2n$ otherwise, where $\err$ is the prediction error.
\end{theorem}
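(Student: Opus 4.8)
The plan is to assemble the theorem from three ingredients that are already in place, so there is little genuinely new to prove. The two approximation guarantees require no further argument: Lemma~\ref{lem:errcons} gives precisely the bound $(2+\tradeoff)\alpha\eta^2$ in the regime $\eta \leq \errbound$, and Lemma~\ref{lem:errrobust} is stated unconditionally (for every instance and every prediction), so in particular it covers the complementary regime $\eta > \errbound$ with the bound $\left(1+\frac{1}{\tradeoff}\right)\errbound^2 n$. Hence the only remaining task is to establish that \mecherr\ is strategyproof, after which the three facts combine immediately.

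For strategyproofness I would, exactly as for \simplemech\ and \mech, invoke the monotonicity characterization of Lemma~\ref{lem:monotonicity} and verify that the allocation rule of \mecherr\ is monotone. The structural observation that drives this is that everything in Mechanism~\ref{mech:errorversionscaledgreedy} up to the final assignment step --- the call $\hat{\schedule} \leftarrow \algo(\hat{\instance})$, the predicted machines $\poptm_j$, the initialization of the scalars $r(i,j)$, the loop maintaining $J_i$, $I$, and $T$, and the two places where scalars are overwritten by $1/\errbound^2$ (lines~\ref{eline:update2} and~\ref{eline:enforce2}) --- reads only from the predicted instance $\hat{\instance}$ and from quantities derived from it, never from the reported instance $\instance$. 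Consequently, once $\hat{\instance}$ is fixed, all the scalars $r(i,j)$ and the (fixed) tie-breaking priority order are determined independently of $\instance$, and the remaining step is exactly ``assign each $j$ to $\argmin_i r(i,j)\,p(i,j)$ under a fixed priority order,'' which is structurally the same allocation rule already analyzed for \simplemech\ and \mech.

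Monotonicity then follows exactly as in Lemma~\ref{lem:wstrategyproof} (reused in Lemma~\ref{lem:astrategyproof}): fix a machine $i$ and two reports $\instance, \instance'$ differing only in row $i$ with $p(i,j) \leq p'(i,j)$ for a job $j$; if $x(i,j)=0$, then raising $p(i,j)$ to $p'(i,j)$ can only (weakly) increase $r(i,j)\,p(i,j)$ while leaving every other scaled time $r(i',j)\,p(i',j)$ and the tie-break order unchanged, so $x'(i,j)=0$ as well. Thus $x(i,j)\geq x'(i,j)$ whenever $p(i,j)\leq p'(i,j)$, which gives $(x(i,j)-x'(i,j))(p(i,j)-p'(i,j)) \leq 0$ for every job, and summing over $j\in M$ yields the monotonicity property; Lemma~\ref{lem:monotonicity} then gives strategyproofness. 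There is no real obstacle in this theorem; the single point that warrants care is the bookkeeping check in the previous paragraph --- namely that lines~\ref{eline:update2}, \ref{eline:enforce1}, and~\ref{eline:enforce2}, the only lines distinguishing \mecherr\ from \mech, introduce no dependence on $\instance$ --- after which the monotonicity argument together with Lemmas~\ref{lem:errcons} and~\ref{lem:errrobust} completes the proof.
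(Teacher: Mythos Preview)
Your proposal is correct and mirrors the paper's own argument: the paper simply notes that strategyproofness of \mecherr\ follows identically to Lemma~\ref{lem:astrategyproof} (since the scalars and tie-breaking depend only on $\hat{\instance}$), and then combines this with Lemmas~\ref{lem:errcons} and~\ref{lem:errrobust}. Your explicit bookkeeping check that lines~\ref{eline:update2}, \ref{eline:enforce1}, and~\ref{eline:enforce2} introduce no dependence on $\instance$ is exactly the point underlying the paper's one-line justification.
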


Thus, with any constants $\tradeoff, \errbound$, and with \algo \ being the $2$-approximation scheduling algorithm from~\cite{LST90}, we obtain a polynomial-time mechanism that achieves an $O(1)$-approximation when the prediction error $\eta$ is such that  $\eta \leq \errbound$, and an $O(n)$-approximation otherwise.

\section{Perfect Consistency Implies Unbounded Robustness}\label{sec:lower_bound}
Recall that every strategyproof scheduling mechanism must be monotone. Below is a very useful lemma that comes immediately from the monotonicity property and is used in most of the lower bound proofs.

\begin{lemma}[\citep{CKV07}]\label{lem:monotonicityapplication}
Consider instances $\instance$ and $\instance'$ and let $\schedule$ and $\schedule'$ be the allocation produced by a strategyproof scheduling mechanism for the given two instances, respectively. If $\instance$ and $\hat{\instance}$ only differs in the processing time of machine $i$ in such a way that $p'(i,j)> p(i,j)$ when $x(i,j) = 0$, and $p'(i,j)< p(i,j)$ when $x(i,j) = 1$, then $\schedule' = \schedule$.
\end{lemma}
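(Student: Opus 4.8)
The plan is to derive the lemma directly from the monotonicity characterization (Lemma~\ref{lem:monotonicity}) by a short sign analysis, which is exactly the ``comes immediately from the monotonicity property'' that the surrounding text alludes to.

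First I set up notation: write $\schedule = \schedule(\instance)$, $\schedule' = \schedule(\instance')$, and observe that $\instance'$ is obtained from $\instance$ by changing only agent $i$'s report from $\instance_i$ to $\instance'_i$, with $\instance_{-i}$ fixed. Since the mechanism is strategyproof, Lemma~\ref{lem:monotonicity} says its allocation rule is monotone, so the monotonicity inequality applied to agent $i$ yields
\[\sum_{j \in M}\bigl(x(i,j) - x'(i,j)\bigr)\bigl(p(i,j) - p'(i,j)\bigr) \leq 0.\]
The heart of the argument is to show that under the stated hypotheses every summand is nonnegative, so that a sum which is $\leq 0$ must in fact vanish term by term. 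If $x(i,j)=0$, the hypothesis gives $p(i,j)-p'(i,j)<0$ while $x(i,j)-x'(i,j) = -x'(i,j)\leq 0$, so the product is $\geq 0$, with equality iff $x'(i,j)=0$. If $x(i,j)=1$, the hypothesis gives $p(i,j)-p'(i,j)>0$ while $x(i,j)-x'(i,j)=1-x'(i,j)\geq 0$, so the product is again $\geq 0$, with equality iff $x'(i,j)=1$. Hence each term is zero, forcing $x'(i,j)=x(i,j)$ for every job $j$; that is, machine $i$ is assigned exactly the same set of jobs under $\schedule'$ as under $\schedule$.

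To conclude $\schedule'=\schedule$ it only remains to account for the jobs not on machine $i$: a job $j$ with $x(i,j)=1$ stays on machine $i$ (hence off every other machine), while a job $j$ with $x(i,j)=0$ must be placed among the remaining machines in both schedules, and in the two-machine instances in which this lemma is used there is a single such machine, so its placement is forced and $\schedule'=\schedule$. (The instance-agnostic content is precisely that machine $i$'s bundle is unchanged; with two machines this is the whole allocation.)

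I do not foresee any real obstacle here: the only thing to be careful about is matching the two sign cases correctly, and the entire idea is the observation that the monotonicity inequality has a nonnegative left-hand side under these hypotheses and must therefore vanish summand by summand.
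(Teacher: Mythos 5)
Your proof is correct and is exactly the standard argument; the paper states this lemma as a citation to prior work without reproving it, and the term-by-term sign analysis of the monotonicity inequality (each summand nonnegative under the hypotheses, hence all zero) is precisely how it is established there. Your caveat is also well placed: monotonicity applied to agent $i$ only pins down machine $i$'s bundle $x'(i,\cdot)=x(i,\cdot)$, which is all that is actually used in the two-machine application, whereas the literal conclusion $\schedule'=\schedule$ for general $n$ would require more than this single inequality.
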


\begin{theorem}
No deterministic strategyproof scheduling mechanism with 1-consistency can achieve any bounded robustness.
\end{theorem}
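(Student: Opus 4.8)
The plan is to construct, for any claimed $1$-consistent strategyproof mechanism, a family of instances and predictions on which robustness must blow up. The key leverage is that $1$-consistency pins down the mechanism's behavior completely whenever $\hat{\instance} = \instance$: it must return an optimal schedule. So I would first fix a small prediction $\hat{\instance}$ (say on $n=2$ machines and a single job, or a handful of jobs) for which the optimal schedule is unique, and consider the instance $\instance = \hat{\instance}$; by $1$-consistency the mechanism outputs that unique optimum $\schedule$. Then I would perturb the \emph{true} instance $\instance$ to a new instance $\instance'$, keeping the prediction $\hat{\instance}$ fixed, changing only one machine $i$: increase $p(i,j)$ on the jobs $j$ that $\schedule$ does \emph{not} assign to $i$, and decrease $p(i,j)$ on the jobs $\schedule$ \emph{does} assign to $i$. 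By Lemma~\ref{lem:monotonicityapplication} (the monotonicity consequence), since the mechanism is strategyproof, it must still output the \emph{same} allocation $\schedule$ on $(\instance', \hat{\instance})$.

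The second step is to choose the perturbation so that $\schedule$ becomes terrible for $\instance'$ while $\OPT(\instance')$ stays small. Concretely, with two machines and one job assigned by $\schedule$ to machine $1$: in $\instance'$ let machine $1$'s cost for the job \emph{drop} slightly (allowed, since $\schedule$ assigns the job to machine $1$), and leave machine $2$'s cost tiny and fixed — wait, that keeps $\schedule$ near-optimal, so instead I need the asymmetry to come through a second job or through making the \emph{non}-assigned direction the expensive one. The cleaner construction: take one job $j$ with $\hat p(1,j)=\hat p(2,j)=1$, so $\OPT(\hat{\instance})=1$ and assume WLOG $\schedule$ assigns $j$ to machine $1$. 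Now perturb machine $2$: Lemma~\ref{lem:monotonicityapplication} lets us \emph{raise} $p(2,j)$ (since $x(2,j)=0$) to an arbitrarily large value $K$, holding everything else fixed; the allocation stays $\schedule$, still putting $j$ on machine $1$, where its true cost is still $1$. That alone doesn't hurt — so I add a second job $j'$ with $\hat p(1,j')$ huge and $\hat p(2,j')$ small, forcing $\schedule$ to put $j'$ on machine $2$ and $j$ on machine $1$ as the unique optimum for $\hat{\instance}$; then in $\instance'$ I raise machine $1$'s true cost on $j$ (the job it was NOT... ) — I must be careful about which direction each machine's perturbation is allowed to go. The right bookkeeping is: for each machine, only push its own costs \emph{down} on jobs it receives and \emph{up} on jobs it doesn't; then iterate machine by machine. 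I would tune the two (or $n$) jobs so that after all allowed perturbations, some machine in $\schedule$ carries load $\Theta(K)$ while a different assignment achieves makespan $O(1)$, giving robustness $\geq \Omega(K) \to \infty$.

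The main obstacle — and the part deserving the most care — is engineering the gadget so that (a) the optimal schedule for the prediction $\hat{\instance}$ is \emph{unique} (otherwise $1$-consistency does not force a specific $\schedule$, and the mechanism could escape), and (b) the monotonicity-permitted perturbations of the true instance are simultaneously enough to make $\schedule$'s makespan unbounded yet leave $\OPT(\instance')$ bounded. This typically requires at least two jobs and exploiting that the ``bad'' direction of perturbation for one machine is the ``good'' (cost-decreasing) direction that keeps $\schedule$ fixed. I would verify the uniqueness of the predicted optimum by a slight generic perturbation of $\hat{\instance}$, and verify the final makespan gap by exhibiting an explicit alternative allocation for $\instance'$. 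Everything else — invoking $1$-consistency to fix $\schedule$, invoking Lemma~\ref{lem:monotonicityapplication} to propagate it, and taking $K\to\infty$ — is routine.
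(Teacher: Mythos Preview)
Your approach is exactly the paper's: pin down the allocation via $1$-consistency on $\instance = \hat{\instance}$, then invoke Lemma~\ref{lem:monotonicityapplication} to freeze that allocation under a one-machine perturbation that makes it arbitrarily bad relative to $\OPT$. You also correctly diagnose that a single job cannot work and that at least two jobs are needed, and you flag uniqueness of the predicted optimum as the point requiring care.

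The gap is that your gadget never actually lands. Your narrative circles the construction (``add a second job $j'$ with $\hat p(1,j')$ huge and $\hat p(2,j')$ small\ldots then in $\instance'$ I raise machine $1$'s true cost on $j$ (the job it was NOT\ldots)'') but trails off without a concrete instance, and your suggestion to ``iterate machine by machine'' is unnecessary overhead. The paper's instantiation is a single clean step on two machines and two jobs: take $\hat{\instance}$ with $\hat p(1,1)=K$, $\hat p(1,2)=1$, $\hat p(2,1)=\infty$, $\hat p(2,2)=K$, whose unique optimum puts job $1$ on machine $1$ and job $2$ on machine $2$ (makespan $K$). Now perturb \emph{only machine $1$}: lower $p(1,1)$ to $0$ (allowed, $x(1,1)=1$) and raise $p(1,2)$ to $1+\epsilon$ (allowed, $x(1,2)=0$). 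By Lemma~\ref{lem:monotonicityapplication} the allocation is unchanged, so job $2$ still sits on machine $2$ at cost $K$, while assigning both jobs to machine $1$ now costs $1+\epsilon$. Robustness is at least $K/(1+\epsilon)\to\infty$. Once you write this instance down, everything you identified as ``routine'' really is; the only missing piece in your proposal was committing to these numbers.
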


\begin{proof}
Consider the the predicted instance $\hat{\instance}$ instance with 2 machines and 2 jobs, and the processing time as in the table on the left of Figure~\ref{fig:1consistenctinstance}. Consider an actual instance that is the same as the prediction, i.e., $\instance = \hat{\instance}$. 

\begin{figure}[h]
    \centering
    \begin{tabular}{c|cc}
    m$\backslash$j & 1 & 2\\
\hline
1 & $K^{\star}$ & 1\\
2 & $\infty$ & $K^{\star}$\\
\end{tabular}
\quad
$\rightarrow$
\quad
\begin{tabular}{c|cc}
m$\backslash$j & 1 & 2\\
\hline
1 & $0^{\star}$ & $1+\epsilon$\\
2 & $\infty$ &  $K^{\star}$
\end{tabular}
    \caption{The predicted (on the left)  and actual (on the right) instances showing any deterministic 1-consistent mechanism suffers unbounded robustness.}
    \label{fig:1consistenctinstance}
\end{figure}

By 1-consistency, the mechanism needs to output $x_{11} = x_{22} = 1$ (the allocation is marked with $\star$).
Now consider another instance $\instance'$ represent by table on the right of Figure~\ref{fig:1consistenctinstance} and consider machine $1$, since $p'(1,1)< p(1,1)$and $x(1,1) = 1$ and $p'(1,2)> p(1,2)$ and $x(1,2) = 1$, by Lemma~\ref{lem:monotonicityapplication}, any truthful mechanism needs to output $x'(1,1) = x(1,1) = 1$ and $x'(1,2) = x(1,2) = 0$, resulting a makespan at lest $K$. Since the optimal makespan of $\instance'$ is $1+\epsilon$ by assigning both jobs to machine $1$, it means the robustness of any strategyproof mechanism is at most $K$. Since $K$ can be arbitrarily large, the robustness is therefore unbounded.
\end{proof}

\newpage
\appendix
\section{Missing Analysis from Section~\ref{sec:warmup}}
\label{sec:appwarmuplemma}

\lemrobustnesslowerbound*

\begin{proof}
We first show that the robustness is at most $n^2$. Consider an instance $\instance$, an optimal assignment $\schedule^{\star}$ for $\instance$,  a job $j$, and a machine $i$ such that $j \in M_i$. Let $\optm_j$ be the machine that $j$ is assigned to according to $\schedule^{\star}$. By the mechanism, we have $1 \leq r(i',j) \leq n$ for all machines $i'$. We therefore have:
\[p(i,j) \leq r(i,j) \cdot p(i,j) \leq r(\optm_j,j) \cdot p(\optm_j,j) \leq n \cdot p(\optm_j,j),\]
where  the second inequality is by line~\ref{mech1:tie} and the fact that $j$ is assigned to $i$. We get that 
\[\ms(\instance, \schedule) = \max_{i \in N}\sum_{j \in M_i}p(i,j) \leq \sum_{i \in N}\sum_{j \in M_i}p(i,j)    \leq n \sum_{i \in N}\sum_{j \in M_i} p(\optm_j,j) \leq n^2\max_{i \in N}\sum_{j \in M_i} p(\optm_j,j) = n^2 \OPT.\]
Consider the predicted instance in Figure~\ref{fig:lowerboundinstance} (a).

\begin{figure}[h]
\centering

\begin{tabular}{c|cccccccc}
m$\backslash$j & 1 & 2 & \dots & $n-1$ & $n$ & $n+1$& \dots & $2n-2$\\
\hline
1 & $1$ & &&& $n(n-1)$ \\
2 &  & 1&&&&$n(n-1)$\\
\dots & & & 1&&&&$n(n-1)$ \\
$n-1$ & & & & 1 &&&&$n(n-1)$\\
$n$ & $n$ & $n$ & $n$ & $n$ \\
\end{tabular}
\small (a)

 \vspace{\floatsep}
\begin{tabular}{c|cccccccc}
m$\backslash$j & 1 & 2 & \dots & $n-1$ & $n$ & $n+1$& \dots & $2n-2$\\
\hline
1 & $1+\epsilon$ & &&& $0$&$0$&$0$& $0$ \\
2 &  & $1+\epsilon$&&&$0$&$0$&$0$& $0$\\
\dots & & & $1+\epsilon$&&$0$&$0$&$0$& $0$ \\
$n-1$ & & & & $1+\epsilon$ &$0$&$0$&$0$& $0$\\
$n$ & $n$ & $n$ & $n$ & $n$ \\
\end{tabular}
\small (b)
\caption{(a) Predicted instance $\hat{\instance}$, empty entry means $\hat{p}(i,j) = \infty$. (b) Actual instance $\instance$, empty entry means $p(i,j) = \infty$.}\label{fig:lowerboundinstance}
\end{figure}

Assume that $\algo$ computes the optimal makespan for the predicted instance. We have $\hat{x}(n,i) = 1$ for $i \in [1,n-1]$. Then the mechanism would set $r(i,i) = \frac{1}{n}$ for any $i \in [1,n-1]$. Now consider the actual instance in Figure~\ref{fig:lowerboundinstance}(b), note that for any $i \in [1, n-1]$, we have:
\[r(i,i) \cdot p(i,i) = n \cdot (1 + \epsilon) > n = r(n,i) \cdot r (n,i).\]
Since machine $n$ has a lower scaled processing time for job $i \in [1,n]$, the mechanism would assign all such jobs to machine $n$, leading to a makespan of $n(n-1)$, whereas the optimal makespan is $1+\epsilon$.\qedhere
\end{proof}

\newpage
\bibliographystyle{abbrvnat}
\bibliography{biblio}

\end{document}